\newcommand{\disp}[1]{{\footnotesize\enumsentence{#1}}}
\newcommand{\techterm}[1]{{\it #1}}
\newcommand{\mycomment}[1]{} 
\newcommand{\commentout}[1]{}
\newcommand{\product}{\mbox{$\bullet$}}
\newcommand{\ninfix}{\mbox{$\Downarrow$}}
\newcommand{\nextract}{\mbox{${\Uparrow}$}}
\newcommand{\bsl}{\mbox{$\backslash$}}
\newcommand{\yields}{\mbox{\ $\Rightarrow$\ }}
\newcommand{\tb}{\hspace*{0.25in}}
\newcommand{\oscott}{\mbox{$\llbracket$}}
\newcommand{\cscott}{\mbox{$\rrbracket$}}
\newcommand{\mysplit}{\mbox{\v{}}}
\newcommand{\splitk}[1]{\mbox{\v{}$^{_{#1}}$}}
\newcommand{\infix}{\mbox{$\downarrow$}}
\newcommand{\extract}{\mbox{${\uparrow}$}}
\newcommand{\dprod}{\mbox{$\odot$}}
\newcommand{\oldseg}[2]{\mbox{$\sqrt[#1]{#2}$}}% type segment, old version
\newcommand{\omg}{\mbox{$\omega$}} 
\newcommand{\powerset}[1]{\mbox{$\mathcal P(#1)$}}
\newcommand{\reduces}{\mbox{$\,\rhd\,$}}
\newcommand{\reducest}{\mbox{$\,\rhd\!^*\,$}}
\newcommand{\bydef}{\mbox{$\;=\;$}} 
\newcommand{\sep}{\mbox{$1$}}
\newcommand{\vect}[1]{\overrightarrow{#1}}% vectorial notation
\newcommand{\rightproj}{\mbox{$\triangleright^{-1}$}}
\newcommand{\leftproj}{\mbox{$\triangleleft^{-1}$}}
\newcommand{\vtab}{\ \vspace{2.5ex}\\}
\newcommand{\LC}{\mbox{$\mathbf{L1}$}}
\newcommand{\LCstar}{\mbox{$\mathbf{L1}$}}
\newcommand{\D}{\mbox{$\mathbf{D}$}}
\newcommand{\Dstar}{\mbox{$\mathbf{D}$}}
\newcommand{\Dstarimp}{\mbox{$\mathbf{D}[\rightarrow]$}}
\newcommand{\da}{\mbox{$\mathcal{DA}$}}
\newcommand{\rd}{\mbox{$\mathcal{RD}$}}
\newcommand{\prsd}{\mbox{$\mathcal{PRSD}$}}
\newcommand{\prdd}{\mbox{$\mathcal{PRDD}$}}
\newcommand{\invprov}[1]{\mbox{$[#1]$}}% In the style of Okada's papers on sem Cut Elim: [A]\bydef\{\Delta\mid cD\vdash \Delta\yields A\}
\newcommand{\rep}[1]{\mbox{$\overline{#1}$}} %representative of class of equivalence
\newcommand{\comp}[1]{\mbox{$\circ_{#1} $}}
\newcommand{\interp}[3]{\mbox{$\oscott #1\cscott_{#3}^{\cal{#2}}$}}% interpretation
\newcommand{\alg}[1]{\mbox{$\mathcal{#1}$}}% algebra
\newcommand{\car}[1]{\mbox{$|\alg{#1}|$}}% carrier of an algebra
\newcommand{\model}[1]{\mbox{$\mathcal{#1}$}}% model
\newcommand{\sigmad}{\mbox{$\Sigma_D$}}% A \Sigma_D algebras, the signature of DAs
\newcommand{\myunder}{\mbox{$\bsl\!\bsl$}}% powerset operation
\newcommand{\myover}{\mbox{$/\!/$}}% powerset operation
\newcommand{\myextract}{\mbox{$\upuparrows$}}% powerset operation
\newcommand{\myinfix}{\mbox{$\downdownarrows$}}% powerset operation
\newcommand{\myconunit}{\mbox{$\mathbb{I}$}}% powerset operation
\newcommand{\mydiscunit}{\mbox{$\mathbb{J}$}}% powerset operation
\newcommand{\cD}{\mbox{$\mathbf{cD}$}}% categorical calculus
\newcommand{\hD}{\mbox{$\mathbf{hD}$}}% hypersequent calculus
\newcommand{\tp}{\mbox{$\mathbf{Tp}$}}
\newcommand{\tpstar}{\mbox{$\mathbf{Tp}$}}
\newcommand{\pr}{\mbox{$\mathbf{Pr}$}}
\newcommand{\config}{\mbox{$\mathcal O$}}%{\mbox{$\mathbf{HConfig}$}}
\newcommand{\configstar}{\mbox{$\mathcal O$}}%{\mbox{$\mathbf{HConfig}_*$}}
\newcommand{\segtyp}{\mbox{$\mathbf{Seg}$}}
\newcommand{\segtypstar}{\mbox{$\mathbf{seg}$}}
\newcommand{\subst}[2]{\mbox{$\left(\begin{array}{c}#1\\#2\\\end{array}\right)$}}
\newcommand{\myplus}{\mbox{$\!+\!$}}
\newcommand{\mytimes}{\mbox{$\!\times\!$}}
\newcommand{\Irr}{\mbox{$\mathbf{Irr}$}}
\newcommand{\T}{\mbox{$\mathcal{T}$}}%(\mathbf{Init})
\newcommand{\timesnd}[1]{\mbox{$\tilde{\times}_{#1}$}}
\newcommand{\f}{\mbox{$\mathbf{f}$}}
\newcommand{\s}{\mbox{$\mathbf{s}$}}
\newcommand{\nDstar}{$\mathbf{nD}_*$}
\newcommand{\ndyields}{\mbox{$\twoheadrightarrow$}}% \yields for \nDstar
\newcommand{\equi}{\mbox{$\;\leftrightarrow\;$}}
\title{Models for the Displacement Calculus}
\author{Oriol Valent\'in%
\thanks{Research partially supported by
SGR2014-890 (MACDA) of the Generalitat de Catalunya,
%MICINN project BASMATI (TIN2011-27479-C04-03) 
and MINECO project APCOM (TIN2014-57226-P).}% XXX
}
\institute{Universitat Polit\`ecnica de Catalunya\\
}
\begin{document}
\maketitle
\begin{abstract}
The displacement calculus \D{} is a conservative extension of the Lambek calculus
\LCstar{} (with empty antecedents allowed in sequents). \LCstar{} can be said to be the logic of
concatenation, while \D{} can be said to be the logic of concatenation and intercalation. In many senses,
it can be claimed that \D{} mimics \LCstar{} in that the proof theory, generative capacity and complexity
of the former calculus are natural extensions of the latter calculus. In this paper, we strengthen this claim.
We present the appropriate classes of models for \D{} and prove some completeness results; strikingly, we
see that these results and proofs are natural extensions of the corresponding ones for \LCstar{}. 
\end{abstract}
\section{Introduction}\label{sectintro}
The displacement calculus \D{} is a quite well-studied extension of the Lambek calculus
\LCstar{} (with empty antecedents allowed in sequents). In many papers (see \cite{mv:disp}, \cite{mvf:tdc} and \cite{mvf:tbilisi}),
\D{} has proved to provide elegant accounts of a variety of linguistic phenomena of English, and of Dutch, namely a processing
interpretation of the so-called Dutch cross-serial dependencies.

The hypersequent format \hD{}\footnote{Not to be confused with the hypersequents of Avron (\cite{avron:91}).} of
displacement calculus is a pure
sequent calculus free of structural rules which subsumes the sequent calculus for \LCstar{}. The Cut elimination algorithm for
\hD{} provided in \cite{mvf:tdc} mimics the one of Lambek's \cite{lambek:mathematics} syntactic calculus (with some minor differences concerning the possibility
of empty antecedents). Like \LCstar{}, \D{} enjoys some nice properties such as the subformula property, decidablity, the finite
reading property and the focalisation property (\cite{mv:cl}).

Like \LCstar{}, \D{} is known to be NP-complete {\cite{moot:tdc}}. Concerning (weak) generative capacity, \D{} recognises the class of well-nested multiple context-free languages (\cite{sorokin:gencapD}). In this respect, the result on generative
capacity generalises the result that states that \LCstar{} recognises the class of context-free languages. One point of divergence in terms of generative capacity is that \D{} recognises the class of the permutation closures of context-free languages (\cite{mv:tag+10}). Finally,
it is important to note that a Pentus-like upper bound theorem for \D{} is not known.

In this paper we present natural classes of models for \D{}. Several strong completeness results are proved, in particular strong completeness w.r.t.\ the class of residuated displacement algebras (a natural extension of residuated monoids). Powerset frames
for \LCstar{} are of interest from the linguistic point of view because of their
relation to language models. Powerset residuated displacement algebras over displacement
algebras are given, which generalise the powerset residuated monoids over monoids, as well as 
over free monoids. Strong completeness
results for the so-called implicative fragment of \D{}, which is very relevant linguistically, is proved in the spirit
of Buszkowski (\cite{buszkowski:comp86}), but the construction is more subtle.
%Moreover, full completeness with respect powerset residuated displacement algebras over displacement algebras is given following again Buszkowski's methods.

The structure of the paper is as follows. In Section~\ref{sectdda} we present the basic proof-theoretic tools (useful for the construction of canonical models) which we shall employ for the study of \D{}
from a semantic point of view. In Section~\ref{sectstrongcompldstarimpl} we provide the proof of two strong completeness of what we call the \emph{implicative\/} fragment w.r.t.\ powerset DAs over standard DAs (with a countably infinite set
of generators) and L-models respectively. 
%In Section~\ref{fullDcomplsect}, the proof of strong completeness of full \D{} w.r.t.\ powerset residuated displacement %algebras over displacement algebras is outlined. 
%Finally, we conclude in the last section.
% Conventions on notation
\section{The Categorical calculus \cD{} and the Hypersequent Calculus \hD{} }\label{sectdda}
% Insertion from lambek_hidden
\D{} is model-theoretically motivated, and the key to its conception is the use of many-sorted universal algebra
(\cite{goguen}), namely $\omg$-sorted universal algebra. Here, we assume a version of many-sorted algebra such that the sort domains of an $\omg$-sorted algebra $\alg A$ are non-empty. With this condition we avoid some pathologies which arise in a na\"ive version of many-sorted universal algebra (cf.\ \cite{goguen}, and \cite{lalement}).
Some definitions are needed. Let $\alg{M}=(\car{M}, +,$ $ 0,\sep)$ be a free monoid
where $\sep$ is a distinguished element of the set of generators $X$ of $\alg M$. We call such an algebra a \emph{separated monoid}.
Given an element $a\in\car{M}$, we can associate to it a number, called its \emph{sort} as follows:
\disp{
$
\begin{array}[t]{lll}
s(\sep)&=&1\\
s(a)&=&0\mbox{ if $a\in X$ and $a\neq \sep$}\\
s(w_1+w_2)&=&s(w_1)+s(w_2)
\end{array}
$
\label{sortdesda}
}
This induction is well-defined since $\alg{M}$ is free and \sep{} is a (distinguished) generator;
the sort function $s(\cdot)$ in a separated monoid simply counts the number of separators an element contains.
\begin{definition}\rm({\textbf{Sort Domains}})\\
Where $\alg{M}=(\car{M}, +, 0,\sep)$
is a separated monoid,
the {\em sort domains\/} $\car{M}_i$ of sort $i$ are defined as follows:
$$
\begin{array}{rcl}
\car{M}_i & = & \{a\in\car{M}: s(a) = i\}, i\geq 0
\end{array}
$$
%\label{defsepmon}
It is readily seen that for every $i,j\geq 0$, $\car{M}_i\cap\car{M}_j=\emptyset\mbox{ iff }i\neq j$.
\end{definition}
% standard DA
\begin{definition} \rm({\textbf{Standard Displacement Algebra}})\\
The \emph{standard displacement algebra} (or standard DA) defined by a separated monoid $(\car M, +, 0, \sep)$
is the $\omg$-sorted algebra with the $\omg$-sorted signature $\Sigma_D=(+,\{\times_{i}\}_{i>0},0,1)$ 
with sort functionality $((i,j\rightarrow i+j)_{i,j\geq 0},(i,j\rightarrow i+j-1)_{i>0,j\geq 0},0,1)$:
$$(\{\car{M}_i\}_{i\geq 0},+,
\{\times_{i}\}_{i>0}, 0, 1)$$
where:
$$
\begin{array}{|l||l|}
\hline
\mbox{operation} & \mbox{which is}\\
\hline\hline
+: \car{M}_i\times \car{M}_j\rightarrow \car{M}_{i+j} & \begin{minipage}{43ex}as in the separated
monoid\end{minipage}\\
\hline
\times_k: \car{M}_{i}\times \car{M}_j\rightarrow \car{M}_{i+j-1} & \begin{minipage}{43ex}$\times_k(s, t)$: the result of
replacing the $k$-th separator in $s$ by $t$\end{minipage}\\
\hline
\hline
\end{array}$$
\label{chapt4opsDA}
\end{definition}
%We will usually write standard DA instead of standard displacement algebra.
% Insertion recursive definition of Tp
\noindent The sorted types of $\mathbf{D}$, which we will interpret residuating w.r.t\ the sorted operations in 
Definition~\ref{chapt4opsDA}, are defined by mutual recursion in Figure~\ref{chapt4typesD}. We let 
$\tp=\bigcup_{i\geq 0}\tp_i$. A subset $B$ of $\car{M}$ is called a \emph{same-sort} subset iff there exists an $i\in\omg$ such that for every $a\in B$,
$s(a)=i$. \D{} types are to be interpreted as same-sort subsets of $\car{M}$.
% Mutual recursion definition of Tp
\begin{figure}
$$\begin{array}{rclll}
\tp_i & ::= & \pr_i &\mbox{where }\pr_i\mbox{ is the set of atomic types of sort }i\\\\
\tp_{0}& ::= & I&\mbox{Continuous unit}\\
\tp_{1}& ::= & J&\mbox{Discontinuous unit}\\\\ 
\tp_{i+j} & ::= & \tp_i\product\tp_j&\mbox{continuous product}\\
\tp_j & ::= & \tp_i\bsl\tp_{i+j}&\mbox{under}\\
\tp_i & ::= & \tp_{i+j}/\tp_j&\mbox{over}\\\\
\tp_{i+j} & ::= & \tp_{i+1}\dprod_k\tp_j&\mbox{discontinuous product}\\
\tp_j & ::= & \tp_{i+1}\infix_k\tp_{i+j}&\mbox{extract}\\
\tp_{i+1} & ::= & \tp_{i+j}\extract_k\tp_j&\mbox{infix}
 \end{array}$$
\caption{The sorted types of $\mathbf{D}$}
\label{chapt4typesD}
\end{figure}
% Standard type interpretation
\noindent I.e.\ every inhabitant of $\oscott A\cscott$ has the same sort. The intuitive semantic interpretation of the connectives is shown in Figure~\ref{typeint}; this interpretation is called the \techterm{standard interpretation}.
Observe that for any type $A\in\tp$, the interpretation of $A$, i.e.\ $\oscott A\cscott$,
is contained in $M_{s(A)}$, where the sort map $s(\cdot)$ for the set \tp{}, is such that
\disp{
$
\begin{array}[t]{lllllllll}
s( p ) &=&i&\mbox{ for p}\in\pr_i&\\
s(I)&=&0\\
s(J)&=&1\\
s(A\product B)&=&s(A)+s(B)\\
s(A\bsl B)&=&s(B)-s(A)\\
s(B/A)&=&s(B)-s(A)\\
s(A\odot_k B)&=&s(A)+s(B)-1\\
s(A\infix_k B)&=&s(B)-s(A)+1\\
s(B\extract_k A)&=&s(B)-s(A)+1
\end{array}
$
\label{sortmapD}
} 
% Standard semantic interpretation of \D{} types
\begin{figure}
\begin{small}
$
\begin{array}[t]{rcll}
%\oscott p \cscott & \subseteq &\car{M}_i\mbox{ for } i\geq 0 &p\in\pr_i\\\\
\oscott I\cscott & = &\{0\}&\mbox{continuous unit}\\
\oscott J\cscott & = &\{1\}&\mbox{discontinuous unit}\\\\
\oscott A\product B\cscott & = &
\{s_1+ s_2|\ s_1\in\oscott A\cscott\ \&\
s_2\in\oscott B\cscott\} & \mbox{product}\\
\oscott A\bsl C\cscott & = & \{s_2|\ \forall s_1\in\oscott A\cscott,
s_1+ s_2\in\oscott C\cscott\} & \mbox{under}\\
\oscott C/B\cscott & = & \{s_1|\ \forall s_2\in\oscott B\cscott,
s_1+ s_2\in\oscott C\cscott\} & \mbox{over}\\\\
\oscott A\dprod_k B\cscott & = &
\{\times_k(s_1, s_2)|\ s_1\in\oscott A\cscott\ \&\
s_2\in\oscott B\cscott\} & k>0
 \mbox{ \normalfont{discontinuous product}}\\
\oscott A\infix_k C\cscott & = & \{s_2|\ \forall s_1\in\oscott A\cscott,
\times_k(s_1, s_2)\in\oscott C\cscott\} & k>0\mbox{ \normalfont{infix}}\\
\oscott C\extract_k B\cscott & = & \{s_1|\ \forall s_2\in\oscott B\cscott,
\times_k(s_1, s_2)\in\oscott C\cscott\} & k>0\mbox{ \normalfont{extract}}\\
\\\\
\end{array}
$
\end{small}
\caption{Standard semantic interpretation of $\mathbf{D}$ types}
\label{typeint}
\end{figure}
\subsection{\D{} and its Categorical Presentation \cD{}}\label{subsectcD}
In~\cite{valentin:phd} \D{} is presented as a categorical calculus:
%it is proved that the identities (or equations) true of standard DAs has as equational theory the so-called class of %(general) displacement algebras (DA) (see Figure~\ref{eqd2}).
% The faithful embedding 
% Insertion of Eq_2
% Eq2 axiomatization
\begin{figure}[h]
\begin{flushleft}
$
\begin{array}{l}
%\hline
\\
\mbox{\textbf{Continuous associativity }}\\
x+(y+ z)\approx (x+ y)+ z
\\\\

\mbox{\textbf{Discontinuous associativity}}\\
x\times_i(y\times_j z)\approx (x\times_i y)\times_{i+j-1} z
\mbox{ }\\
(x\times_i y)\times_j z\approx x\times_i(y\times_{j-i+1} z)\mbox{
 if }i\leq j\leq 1+s(y)-1\\\\

\mbox{\textbf{Mixed permutation}} \\
(x\times_i y)\times_j z\approx (x\times_{j-\,S(y)\,+1}z)\times_{i} y\mbox{ if }j>i+s(y)-1\\
(x\times_{i} z)\times_{j} y
\approx (x\times_{j} y)\times_{i+S(y)-1}z \mbox{ if }j<i\\\\

\mbox{\textbf{Mixed associativity}}\\
(x+ y)\times_{i} z \approx (x\times_{i}z)+ y\mbox{ if }1\leq i\leq s(x)\\
(x+ y)\times_{i} z \approx x+ (y\times_{i-s(x)}z)\mbox{ if }x+1\leq i\leq s(x)+s(y)\\\\

\mbox{\textbf{Continuous unit and discontinuous unit}}\\
0+ x\approx x\approx x+ 0\mbox{ and }1\times_{1} x \approx x\approx x\times_{i} 1 \\
%\hline
\end{array}
$
\end{flushleft}
\caption{Axiomatisation of \da{}}
\label{eqd2}
\end{figure}
\disp{
$
\begin{array}[t]{lll}
A\rightarrow A\mbox{ Axiom}\\
A\product B\rightarrow C\mbox{ iff }A\rightarrow C/B\mbox{ iff }B\rightarrow A\bsl C\quad\quad\quad\mbox{ $Res_{cont}$}\\
A\odot_i B\rightarrow C\mbox{ iff }A\rightarrow C\extract_i B\mbox{ iff }B\rightarrow A\infix_i C\quad\ \mbox{ $Res_{disc}$}\\\\
A\bullet I\equi A \equi I\bullet A\quad A\odot_i J\;\equi\; A \;\equi\; J\odot_1 A\\
(A\bullet B)\bullet C\;\equi\;A\bullet (B\bullet C)\quad\quad\quad\quad\quad\mbox{ Continuous associativity}\\
A\odot_i(B\odot_j C)\equi (A\odot_i B)\odot_{i+j-1} C\quad\mbox{ Discontinuous associativity}\\
(A\odot_i B)\odot_j C\equi A\odot_i(B\odot_{j-i+1} C)
\mbox{, if }i\leq j\leq 1+s(B)-1\\\\
(A\odot_i B)\odot_j C\equi (A\odot_{j-s(B)+1} C)\odot_{i}B, \mbox{ if }j> i +s(B)-1\mbox{ Mixed permutation}\\
(A\odot_i C)\odot_j B\equi (A\odot_j B)\odot_{i+s(B)-1}C,\mbox{ if }j < i\\\\
(A\bullet B)\odot_i C\equi (A\odot_i C)\bullet B\mbox{, if }1\leq i\leq S(A)\mbox{ Mixed associativity}\\
(A\bullet B)\odot_i C\equi A\bullet(B\odot_{i-s(C)}C)\mbox{, if }s(A)+1\leq i\leq s(A)+s(B)\\
\mbox{From }A\rightarrow B\mbox{ and } B\rightarrow C \mbox{ we have }A\rightarrow C\quad\mbox{ Transitivity}
\end{array}
$
\label{catcalcdef}
}
In Figure~\ref{eqd2} we find the axiomatisation of the class of DAs \da{}. 
Just as in the case of \LC{}, the natural class of algebras is the class of residuated monoids $\cal RM$, in the case
of \D{}, the natural class of algebras is the class of residuated displacement algebras (residuated DAs) $\rd$.

One can restrict the definition of the sorted types. Let $\mathbf C$ be a subset of the connectives considered in the definition of types in Figure~\ref{chapt4typesD}. We define $\tp[\mathbf C]$ as the least set of sorted types generated by 
\pr{} and the set of connectives $\mathbf C$. If the context is clear, we will write \tp{} instead of $\tp[\mathbf C]$.

Let us define the formal definition of a model. A model $\mathcal M=(\alg{A}, v)$ comprises a 
(residuated) \sigmad{}-algebra
and a $\omg$-sorted mapping $v:\pr\rightarrow \tp[\mathbf C]$ called a valuation. The mapping $\widehat v$ is 
the unique function which extends $v$ and which is such that $\widehat v(A*B)=\widehat v(A)*\widehat v(B)$ (if * is a binary
connective of $\mathbf C$) and $\widehat v(*A)=* \widehat v(A)$ (if * is a unary connective of $\mathbf C$). Finally, a $0$-ary connective is mapped into the corresponding unit of $\car A$. Needless to say, the mappings $v$ and $\widehat v$ preserve the sorting regime. 
%connective is mapped into an element of $\car A$,
% Definition

Let us see that $\D$ (with all the connectives) is strongly complete w.r.t. \rd{}. Soundness is trivial because
we are considering the categorical calculus $\cD$.
For completeness,
we can define the well-known Lindenbaum-Tarski construction to see that \cD{} is strongly complete w.r.t. \rd{}. The canonical model is $(\alg{L},v)$ where $\alg{L}$ is $(\tp/\theta,\comp{}, (\comp{i})_{i>0},\myunder,\myover,(\myinfix_i)_{i>0}(\myextract_i)_{i>0},\mathbb I,\mathbb J;\leq)$ where the interpretation of the new symbols is as expected. Let $\theta_R$ be the equivalence relation on \tp{} defined as follows: $A\theta_R B$ iff $R\vdash_{\cD}
A\rightarrow B$ and $R\vdash_{\cD}B\rightarrow A$, where $R$ is a set of non-logical axioms. Using the usual tonicity properties for the connectives of \tp{}, one proves that $\theta_R$ is a congruence. Where $A$ is a type, $\rep A$ is an element of $\tp/\theta_R$, i.e.\ \tp{} modulo $\theta_R$. We define $\rep A\leq \rep B$ iff $R\vdash_{\cD}A\rightarrow B$. We define the valuation $v$ as $v( p )=\rep p$ ($p$ is a primitive type). We have that for every type $A$, $\widehat v(A)=\rep A$. 
 Finally, one has that $(\alg L,v)\models A\rightarrow B$ iff 
$R\vdash_{\cD}A\rightarrow B$. From this, we infer the following theorem:
\begin{theorem}
The calculus \cD{} is strongly complete w.r.t.\ \rd{}.
\label{strongcomplRD}
\end{theorem}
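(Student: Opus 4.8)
The plan is to follow exactly the Lindenbaum--Tarski recipe sketched above, and to fill in the two things the sketch leaves implicit: that the quotient structure $\alg L$ really is an object of \rd{}, and that the canonical valuation validates precisely the derivable sequents. Soundness needs no work, since \cD{} is itself the (in)equational theory of \rd{}, so every rule of \cD{} is a valid entailment in any residuated DA; hence $R\vdash_{\cD}A\rightarrow B$ implies $R\models A\rightarrow B$. The substance is the converse, obtained by building a single canonical model $(\alg L,v)$ that satisfies $R$ and in which derivability and validity coincide.

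First I would fix a set $R$ of (same-sort) non-logical axioms and define $\theta_R$ by $A\,\theta_R\,B$ iff $R\vdash_{\cD}A\rightarrow B$ and $R\vdash_{\cD}B\rightarrow A$. Since the reflexivity axiom, the residuation biconditionals and the transitivity rule all preserve sorts, two-way derivability only ever relates types of equal sort, so the quotient splits into the sort domains $(\tp_i/\theta_R)_{i\geq 0}$ and the $\omega$-sorted signature is respected. Using the tonicity of each connective --- $\product$ and $\dprod_i$ monotone in both arguments, $\myover$ and $\myextract_i$ monotone in the first and antitone in the second, $\myunder$ and $\myinfix_i$ antitone in the first and monotone in the second --- one shows that $\theta_R$ is a congruence, so the operations descend to well-defined maps on representatives; this is the standard argument, repeated once per connective. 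The order $\rep A\leq\rep B$ iff $R\vdash_{\cD}A\rightarrow B$ is then well-defined, reflexive (by the axiom), transitive (by Transitivity) and antisymmetric (by the very definition of $\theta_R$).

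The key step is checking $\alg L\in\rd{}$. The residuation laws are immediate translations of $Res_{cont}$ and $Res_{disc}$: $\rep A\product\rep B\leq\rep C$ iff $\rep A\leq\rep C\,\myover\,\rep B$ iff $\rep B\leq\rep A\,\myunder\,\rep C$, and likewise for $\dprod_i$, $\myinfix_i$, $\myextract_i$. The displacement-algebra equations --- continuous and discontinuous associativity, mixed permutation, mixed associativity, and the two unit laws --- hold in $\alg L$ because each is a derivable equivalence ($\equi$) of \cD{}, so the two types involved are identified by $\theta_R$ and the corresponding equation between representatives holds on the nose, with the side-conditions on the indices inherited verbatim. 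I expect this to be the main obstacle: unlike the Lambek case, where one only needs a residuated monoid, here the bookkeeping must track the sort-indexed, partially defined operations $\dprod_i,\myinfix_i,\myextract_i$ and ensure that each equation is imposed exactly on the index range for which the categorical axiom is stated, so that $\alg L$ meets the full \da{}/\rd{} axiomatisation rather than some strictly weaker fragment.

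It remains to connect the model to derivability. Defining $v(p)=\rep p$ for primitive $p$, a routine induction on types --- base cases for atoms and for the units $I,J$ sent to $\myconunit,\mydiscunit$, inductive cases using that the operations of $\alg L$ are the quotient operations --- gives $\widehat v(A)=\rep A$ for every $A$. Consequently $(\alg L,v)\models A\rightarrow B$ iff $\rep A\leq\rep B$ iff $R\vdash_{\cD}A\rightarrow B$. In particular every axiom of $R$ is validated, so $(\alg L,v)$ is itself a model of $R$; and if $R\models A\rightarrow B$ then this holds in $(\alg L,v)$, whence $R\vdash_{\cD}A\rightarrow B$. Together with soundness this yields strong completeness of \cD{} w.r.t.\ \rd{}.
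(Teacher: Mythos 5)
Your proposal is correct and follows essentially the same route as the paper: the Lindenbaum--Tarski construction with $\theta_R$ defined by two-way $R$-derivability, congruence via tonicity, the order $\rep A\leq\rep B$ iff $R\vdash_{\cD}A\rightarrow B$, the canonical valuation $v(p)=\rep p$ with $\widehat v(A)=\rep A$, and the observation that the \da{} equations and residuation laws hold in the quotient because they are axioms (respectively biconditional rules) of \cD{}. The paper leaves the verification that $\alg L\in\rd{}$ and the sort bookkeeping implicit; your proposal merely makes these explicit, which is a faithful expansion rather than a different argument.
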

% The equational theory of \da{} 
% The class \rd{}
% Other DAs like powerset DAs over DAs

Since $\da$ is a variety\footnote{The term \emph{equational class\/} is sometimes used in the literature.} (see Figure~\ref{eqd2}), it is closed by subalgebras, direct products and homomorphic images, which give additional
DAs.

We have other interesting examples of DAs, for instance the \emph{powerset DA over }$\alg{A}=
(\car{A},+,\{\times_i\}_{i>0},0,1)$, which we
denote $\powerset{A}$. We have:
\disp{
$
\powerset{A}=(\car{\powerset{A}},\cdot,\{\comp{i}\}_{i>0},\mathbb I,\mathbb J)
$
\label{defpowsetda}
}
%We recall the definition of \emph{same-sort\/} subsets. A subset $B$ of $\car{A}$ is called a \emph{same-sort} subset iff:
%\disp{
%$
%\mbox{There exists an }i\in\omg\mbox{ such that for every }a\in B,\mbox{ }s(a)=i
%$
%\label{samesordef} 
%}
The notation of the carrier set of $\powerset{A}$ presupposes that its members are same-sort subsets;
notice that $\emptyset$ vacuously satisfies the \emph{same-sort\/} condition.
Where $A$, $B$ and $C$ denote same-sort subsets of $\car{A}$, the operations $\mathbb{I}$, $\mathbb{J}$, 
$\cdot$ and $\comp{i}$ are defined as follows:
\disp{
$
\begin{array}[t]{lll}
\mathbb{I}&=&\{0\}\\
\mathbb{J}&=&\{1\}\\
A\cdot B&=&\{a+b:a\in A\mbox{ and }b\in B\}\\
A\comp{i}B&=&\{a\times_i b:a\in A\mbox{ and }b\in B\}\mbox{ }
\end{array}
$
\label{defcompcompi}
}
\noindent It is readily seen that for every $\alg A$, $\powerset{A}$ is in fact a DA. Notice that every sort domain
$\car{\powerset{A}}_i$ is a collection of same-sort subsets, that the sort domains
of $\powerset{A}$ are non-empty, but no longer satisfy that $\car{\powerset{A}}_i\cap \car{\powerset{A}}_j=
\emptyset$ iff
$i\neq j$, since the empty set $\emptyset\in\car{\powerset{A}}_i$ for every $i\geq 0$. 
% Residuated powerset DAs
A residuated powerset displacement algebra over a displacement algebra $\powerset A$ is the following:
\disp{
$\powerset{A}=(\car{\powerset{A}},\cdot,\myunder,\myover,\{\comp{i}\}_{i>0},, 
\{\myextract_i\}_{i>0},\{\myinfix_i\}_{i>0},\mathbb I,\mathbb J;\subseteq)$
\label{defpowersetresda}
}
where $\myunder$, $\myover$, $\myextract_i$ and $\myinfix_i$ are defined
as follows:
\disp{
$
\begin{array}[t]{lll}
A\myunder B&=&\{d:\mbox{ for every }a\in A,\mbox{ }a+d\in B\}\\
B\myover A&=&\{d:\mbox{ for every }a\in A,\mbox{ }d+a\in B\}\\
B\myextract_i A&=&\{d:\mbox{ for every }a\in A,\mbox{ }d\times_i a\in B\}\\
A\myinfix_i B&=&\{d:\mbox{ for every }a\in A,\mbox{ }a\times_i d\in B\}\\
\end{array}
$
\label{defmyunderetc}
}
The class of powerset residuated DAs over a DA is denoted \prdd{}. The class of powerset residuated DAs over a standard
DA is denoted \prsd{}. Finally, the subclass of \prsd{} which is formed by powerset residuated algebras over 
finitely-generated standard DA are known simply as \emph{L-models\/}. 

Every standard DA $\alg{A}$ has two remarkable properties, namely the property that sort domains $\car{A}_i$ 
(for $i>0$) can be defined in terms of $\car{A}_0$, and the property that every element $a$ of a sort domain $\car{A}_i$
is decomposed uniquely around the separator \sep{}:
\disp{
$
\begin{array}[t]{ll}
\mbox{(S1) }\mbox{For $i>0$, }\car{A}_i=\underbrace{\car{A}_0\comp{}\{1\}\cdots\{1\}\comp{}\car{A}_0}_{\mbox{$(i-1)$ $1's$}}\\
\mbox{(S2) }\mbox{For $i>0$, if $a_0+\sep+\cdots+\sep +a_i = b_0+\sep+\cdots+\sep +b_i$ then }\\
a_k=b_k\mbox{ for }0\leq k\leq i
\end{array}
$
\label{sepdaprop}
}
%The properties $(S1)$ and $(S2)$
% Strong completeness w.r.t. \rd{}
%

Standard DAs, as their name suggests, are particular cases of (general) DAs:
\begin{lemma}
The class of standard DAs is a subclass of the class of DAs.\footnote{Later we see that the inclusion is proper.}
\label{stdaaredas}
\end{lemma}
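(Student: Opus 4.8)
The plan is to show that every standard DA satisfies the equational axiomatisation of \da{} given in Figure~\ref{eqd2}, so that a standard DA, being a \sigmad{}-algebra validating all those identities, is by definition a DA. Since a standard DA is built from a separated monoid $(\car M,+,0,\sep)$ with the sort domains $\car M_i$ as sort domains, and with $+$ and the intercalations $\times_k$ defined concretely (with $\times_k(s,t)$ replacing the $k$-th separator of $s$ by $t$), the task reduces to a finite verification: for each axiom in Figure~\ref{eqd2} I would check that the two sides denote the same element of $\car M$, using freeness of the monoid and the unique-decomposition property (S2) of \ref{sepdaprop}.

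First I would dispose of the easy axioms. Continuous associativity and the continuous-unit laws $0+x\approx x\approx x+0$ hold because $(\car M,+,0)$ is a monoid. The discontinuous-unit laws $1\times_1 x\approx x\approx x\times_i 1$ follow directly from the definition of $\times_k$: replacing the (unique) separator of $\sep$ by $x$ yields $x$, and replacing the $i$-th separator of $x$ by $\sep$ restores that separator, hence returns $x$. For the remaining groups — discontinuous associativity, mixed permutation, and mixed associativity — I would argue combinatorially. Writing a sort-$i$ element as its canonical decomposition $a_0+\sep+a_1+\sep+\cdots+\sep+a_i$ around its separators, each operation $\times_k$ or $+$ acts by splicing one decomposition into a designated slot of another, and each equation asserts that two different orders of splicing (governed by the stated side-conditions on the indices $i,j$ and the sorts $s(y),s(z)$) produce textually identical words. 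The side-conditions are exactly the ranges of separator positions for which the two insertion operations do not interfere, or interfere in a commuting way; checking each is a bookkeeping argument on which separator of which factor receives which insertion.

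The one genuine subtlety, and the step I expect to be the main obstacle, is that these identities are equalities of \emph{elements} of the free monoid, so I must invoke property (S2) — unique decomposition around separators — to conclude that matching the two sides position-by-position actually forces them equal, rather than merely equal up to some rearrangement that freeness might not license. Concretely, after computing both sides of, say, the mixed-permutation law as explicit sums of the underlying $a_k,b_k,c_k$ blocks interleaved with separators, I would appeal to freeness of $\alg M$ and to (S2) to read off that the blocks appear in the same order with the same separators on both sides; the side-condition $j>i+s(y)-1$ (respectively $j<i$) is precisely what guarantees that the separator targeted by the outer $\times_j$ lies outside (respectively before) the block inserted by the inner $\times_i$, so the two insertions are order-independent. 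Once every axiom of Figure~\ref{eqd2} is verified in this way, the standard DA satisfies the defining equations of \da{}, and hence is a DA. $\square$
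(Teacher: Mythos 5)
Your proposal is correct and takes essentially the same approach as the paper: the paper's proof also verifies each axiom of Figure~\ref{eqd2} directly, decomposing elements of the standard DA into their blocks around separators (via the vectorial notation $\vect{a}_i^j$, well-defined by freeness of the underlying monoid), computing both sides of each identity as explicit interleavings of blocks and separators, and comparing them, with the units and continuous associativity dismissed as trivial just as you do. The only cosmetic difference is that the paper secures uniqueness of the decomposition up front via freeness rather than invoking (S2) at the comparison step.
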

\begin{proof}
We define a useful notation which will help us to prove the lemma. Where $\mathcal{A}=(\car{A},+,(\times_i)_{i>0},0,1)$
is a standard DA, let $a$ be an arbitrary element of sort $s(a)$. We associate to every $a\in\car{A}$ a sequence of elements $a_0,\cdots,a_{s(A)}$. We have the following vectorial notation:
\disp{
$
\vect{a}_i^j=\left\{
\begin{array}{l}
a_i\mbox{, if }i=j\\
\vect{a}_i^{j-1}+\sep+a_{j}\mbox{, if }j-i>0
\end{array}
\right.
\label{decompositionelstda}
$}
Since $\alg{A}$ is a standard DA, the $a_i$ associated to a given $\vect a$ are unique (by freeness of the underlying monoid). We have that $a=\vect{a}_0^{s(A)}$, and we write $\vect a$ in place of $\vect{a}_0^{s(A)}$.
Consider arbitrary elements $\vect a$, $\vect b$ and $\vect c$ of $\car A$:
\begin{itemize}
\item[\product{}] Continuous associativity is obvious.\\
\item[\product{}] Discontinuous associativity. Let $i, j$ be such that $i\leq j\leq i+s(\vect a)-1$:
$$
\begin{array}{l}
\vect{b}\!\times_j\!\vect{c}= \vect{b}_0^{i-1}\!+\!\vect{c}\!+\!\vect{b}_i^{s(b)}\mbox{, therefore:}\\
\vect{a}\!\times_i\!(\vect{b}\!\times_j\!\vect{c})= \boxed{\vect{a}_0^{i-1}\!+\!\vect{b}_0^{j-1}\!+\!\vect{c}\!+\!\vect{b}_j^{s(b)}
\!+\!\vect{a}_i^{s(a)}}
\end{array}\quad\quad\hfill{(*)}
$$%\label{lemmasdda1}
On the other hand, we have that:
$$
\begin{array}{l}
\vect{a}\times_i\vect{b}= \vect{a}_0^{i-1}\myplus\vect{b}\myplus\vect{a}_i^{s(\!\small{{\vect a}}\!)}
=\vect{a}_0^{i-1}\myplus\vect{b}_0^{j-1}\myplus\!\underbrace{\sep}_{(i+j-1)\scriptsize{\mbox{-th }separator}}\!\myplus\vect{b}_j^{s(\!\small{{\vect b}}\!)}\myplus\vect{a}_i^{s(\!\small{{\vect a}}\!)}\\
\end{array}
$$
It follows that:
$$(\vect{a}\mytimes_i\vect{b})\mytimes_{i+j-1}\vect{c}= \boxed{\vect{a}_0^{i-1}\myplus\vect{b}_0^{j-1}\myplus\vect{c}
\myplus\vect{b}_j^{s(\!\small{{\vect b}}\!)}
\myplus\vect{a}_i^{s(\!\small{{\vect a}}\!)}}\quad\quad(**)$$\label{lemmasdda2}
By comparing the right hand side of (*) and (**), we have therefore: 
$$\vect{a}\!\times_i\!(\vect{b}\!\times_j\!\vect{c})=(\vect{a}\mytimes_i\vect{b})\mytimes_{i+j-1}\vect{c}$$
% Mixed permutation
\item[$\product$] Mixed Permutation. Consider $(\vect a\mytimes_i \vect b)\mytimes_j \vect c$ and suppose that $i+s(\vect b)-1<j$:
$$\vect{a}\mytimes_i\vect{b} = \underbrace{\vect{a}_0^{i-1}\myplus\vect{b}\myplus\vect{a}_i^{j-s(\vect{b})}}_{
j-s(\vect{b})+s(\vect{b})-1=j-1\mbox{ separators}}\!\!
+ 1
\myplus\vect{a}_{j-s(\vect{b}+1)}^{s(\!\small{{\vect a}}\!)}$$
It follows that:\\
$$(\vect a\mytimes_{i}\vect b)\mytimes_{j} \vect c = \boxed{\vect{a}_0^{i-1}\myplus\vect{b}\myplus\vect{a}_i^{j-s(\vect{b})}\myplus
\vect{c}\myplus\vect{a}_{j-s(\vect{b})+1}^{s(\vect{a})}}\quad\quad(***)$$\label{lemmasdda3}
Since $i+s(\vect{b})-1<j$, then $i<j-s(\vect{b})+1$. Then we have that:
$$\vect{a}\mytimes_{j-s(\vect{b})+1}\vect{c}=\vect{a}_0^{i-1}\myplus 1\myplus\vect{a}_i^{j-s(\vect{b})}\myplus\vect{c}\myplus\vect{a}_{j-s(\vect{b})+1}^{s(\vect{a})}$$
It follows that:\\
$$
(\vect{a}\mytimes_{j-s(\vect{b})+1}\vect{c})\mytimes_i\vect{b}= 
\boxed{\vect{a}_0^{i-1}\myplus\vect{b}\myplus\vect{a}_i^{j-s(\vect{b})}\myplus
\vect{c}\myplus\vect{a}_{j-s(\vect{b})+1}^{s(\vect{a})}}\quad\quad(****)
$$\label{lemmasdda4}
By comparing the right hand side of (***) and (****), we have therefore:
$$(\vect a\mytimes_{i}\vect b)\mytimes_{j} \vect c =(\vect{a}\mytimes_{j-s(\vect{b})+1}\vect{c})\mytimes_i\vect{b}$$
\item[$\product$]Mixed associativity. There are two cases: $i\leq s(\vect{a})$ or $i>s(\vect{a})$.
Considering the first one, this is true for:\\
$$
(\vect{a}\myplus\vect{b})\mytimes\vect{c}=(\vect{a}_0^{i-1}+1+\vect{a}_i^{s(\vect{a})})\mytimes_i\vect{c}=
\vect{a}_0^{i-1}+\vect{c}+\vect{a}_i^{s(\vect{a})}+\vect{b}=(\vect{a}\mytimes_i\vect{c})\myplus\vect{b}
$$
The other case corresponding to $i>s(\vect{a})$ is completely similar.\\
\item[$\product$] The case corresponding to the units is completely trivial.
\end{itemize}
\qed
\end{proof}
% hD
\subsection{The Hypersequent Calculus hD{}}\label{subsecthD}
We will now consider the \emph{string-based} hypersequent syntax from \cite{mfv:iwcs07}.
The reason for using the prefix \emph{hyper}
in the term \emph{sequent} is that the data-structure used in hypersequent antecedents is quite nonstandard.
% segtyp
A fundamental tool to build the data-structure of a sequent calculus for \D{}  is the notion of \emph{type-segment\/}.
For any type of sort $0$ $seg(A)=\{A\}$. If $s(A)>0$ then $seg(A)=\{\oldseg{0}{A},\cdots,\oldseg{s(A)}{A}\}$.
We call $seg(A)$ the set of type-segments of $A$. If $\mathbf C$ is a set of connectives, we can now define the set of type-segments corresponding to the set $\tp[\mathbf C]$ of types generated by the connectives $\mathbf C$ as $\mathbf{seg}[\mathbf C]=\bigcup_{A\in\mathbf{Tp}[\mathbf C]}$seg(A).
% hyperconfig
\noindent Type segments of sort $0$ are types. But, type segments of sort greater than
$0$ are no longer types. Strings of type segments can form meaningful logical material like
the set of configurations, which we now define. Where $\mathbf C$ is a set of connectives the {\em configurations\/}
$\config[\mathbf C]$ are defined in BNF unambiguously by
mutual recursion as follows, where $\Lambda$ is the empty string and \sep{} is the metalinguistic separator:
\disp{$
\footnotesize{
\begin{array}[t]{rcll}
\config[\mathbf C] & ::= & \Lambda \\%& A\ \mbox{for\ } S(A)=0\\
%\config & ::= & \sep\\
\config[\mathbf C] & ::= & \sep, \config[\mathbf C]\\
\config[\mathbf C] & ::= & A, \config[\mathbf C]\quad\mbox{for\ } s(A)=0\\
\config[\mathbf C]& ::= & \oldseg{0}{A},\config[\mathbf C], \oldseg{1}{A}, \cdots, \oldseg{s(A)-1}{A}, \config[\mathbf C],
\oldseg{s(A)}{A}_{s(A)}, \config[\mathbf C]\quad\mbox{for\ } s(A)>0
\end{array}}
\label{defhyperconfig}
$}
\noindent The intuitive semantic interpretation of the last clause from~(\ref{defhyperconfig}) consists of elements $\alpha_0+\beta_1+\alpha_1+\cdots+$ $\alpha_{n{-}1}+\beta_n+\alpha_n$
where $\alpha_0+\sep+\alpha_1+\cdots+\alpha_{n{-}1}+\sep+\alpha_n\! \in \oscott A\cscott$ and
 $\beta_1,\cdots, \beta_n$ are the interpretations of the intercalated configurations. 
 
 If the context is clear
 we will write $\config$ for $\config[\mathbf C]$, and likewise $\tp$, and $\mathbf{seg}$.
 
The syntax in which $\config$ has been defined is called \techterm{string-based hypersequent syntax}. An equivalent syntax for $\config$ is called \techterm{tree-based hypersequent syntax},
which was defined in \cite{mv:disp}, \cite{mvf:tdc}. For proof-search and human readability, the tree-based notation is more
convenient than the string-based notation, but for semantic purposes, the string-based notation turns out to be very useful
since the canonical model construction considered in Section~\ref{sectstrongcompldstarimpl} relies on the set of type-segments. %$(seg[\rightarrow]\cup\{\sep\}$).

In string-based notation the \techterm{figure} $\vect{A}$ of a type $A$ is defined as follows:
\disp{$
\vect{A} = \left\{
\begin{array}{ll}
A & \mbox{if\ } s(A)=0\\
\oldseg{0}{A},\sep,\oldseg{1}{A},\cdots,\oldseg{s(A)-1}{A},\sep,\oldseg{s(A)}{A} & \mbox{if\ } s(A)>0
\end{array}\right.$
}
The sort of a configuration is the number of metalinguistic separators it contains. We have $\config=
\bigcup_{i\geq 0}\config_i$, where $\config_i$ is the set of configurations of sort $i$. We define a more general notion
of configuration, namely preconfiguration. If $V$ denotes $\mathbf{seg}[\mathbf C]\cup \{\sep\}$, a preconfiguration $\Delta$ is simply a word of $V^*$. Obviously, we have that $\config\subsetneq V^*$. A preconfiguration $\Delta$ is proper iff 
$\Delta\not \in\config$. As in the case of configurations, preconfigurations have a sort.

Where $\Gamma$ and $\Phi$ are configurations and the sort of $\Gamma$ is
at least $1$,
$\Gamma|_k\Phi$ ($k>0$) signifies the configuration which is the result of
replacing the $k$-th separator in $\Gamma$ by $\Phi$. The notation
$\Delta\langle\Gamma\rangle$, which we call a configuration with a distinguished configuration $\Gamma$ abbreviates
the following configuration: $\Delta_0,\Gamma_0,\Delta_1,\cdots,\Delta_{s(\Gamma)},\Gamma_{s(\Gamma)},$ $
\Delta_{s(\Gamma)+1}$, where $\Delta_i\in\config$ but $\Delta_0$ and $\Delta_{s(\Gamma)+1}$ are possibly proper preconfigurations. 
 When a type-occurrence $A$ in a configuration is written without vectorial notation, that means that the
sort of $A$ is $0$. However, when one writes the metanotation for configurations $\Delta\langle\vect A\rangle$, this does not mean that the sort of $A$ is necessarily greater than $0$.

A \techterm{hypersequent\/} $\Gamma\yields A$ comprises an  
antecedent configuration in string-based notation of sort $i$ and a succedent type $A$ of sort $i$.
The hypersequent calculus for \D{} is as shown in Figure~\ref{Dseqcalc}.
% Insertion hD rules
\begin{figure}[h]
$$
\begin{array}{l}
\vect A\yields A\mbox{ if $A$ is primitive}\vtab
% Continuous unit
\prooftree
\Delta\langle\Lambda\rangle \yields A
\justifies
\Delta\langle I\rangle\yields A
\using IL
\endprooftree
\tb
\prooftree
\justifies
\Lambda\yields I
\using IR
\endprooftree\vtab
% Discontinuous unit
\prooftree
\Delta\langle\sep\rangle \yields A
\justifies
\Delta\langle \vect J\rangle\yields A
\using JL
\endprooftree
\tb
\prooftree
\justifies
\sep\yields J
\using JR
\endprooftree\vtab
\prooftree
\Gamma\yields A
\tb
\Delta\langle\vect{B}\rangle\yields C
\justifies
\Delta\langle\vect{B/A},\Gamma\rangle\yields C
\using/ L
\endprooftree
\tb
\prooftree
\Delta,\vect A \yields B
\justifies
\Delta\yields B/A
\using/ R
\endprooftree\vtab
% \bsl rules
\prooftree
\Gamma\yields A
\tb
\Delta\langle\vect{B}\rangle\yields C
\justifies
\Delta\langle\Gamma,\vect{A\bsl B}\rangle\yields C
\using\bsl L
\endprooftree
\tb
\prooftree
\vect A,\Delta \yields B
\justifies
\Delta\yields A\bsl B
\using\bsl R
\endprooftree\vtab
% Continuous product
\prooftree
\Delta\langle\vect{A},\vect{B}\rangle
\yields  C
\justifies
\Delta\langle\vect{A\bullet B}\rangle
\yields  C
\using\bullet L
\endprooftree
\tb
\prooftree
\Delta\yields A
\tb
\Gamma\yields  B
\justifies
\Delta,\Gamma\yields A\bullet B
\using\bullet R
\endprooftree\vtab
% Implicative Discontinuous
\prooftree
\Gamma\yields A
\tb
\Delta\langle\vect{B}\rangle\yields C
\justifies
\Delta\langle\vect{B\extract_i A}|_i\Gamma\rangle\yields C
\using\extract_i L
\endprooftree
\tb
\prooftree
\Delta|_i\vect A\yields B
\justifies
\Delta\yields B\extract_i A
\using\extract_i R
\endprooftree\vtab
% infix rules
\prooftree
\Gamma\yields A
\tb
\Delta\langle\vect{B}\rangle\yields C
\justifies
\Delta\langle\Gamma|_i\vect{A\infix_i B}\rangle\yields C
\using\infix_i L
\endprooftree
\tb
\prooftree
\vect A|_i\Delta\yields B
\justifies
\Delta\yields A\infix_i B
\using\infix_i R
\endprooftree\vtab
% Discontinuous product rules
\prooftree
\Delta\langle\vect{A}|_i\vect{B}\rangle\yields C
\justifies
\Delta\langle\vect{A\odot_i B}\rangle\yields C
\using
\odot_i L
\endprooftree
\tb
\prooftree
\Delta\yields A\tb \Gamma\yields B
\justifies
\Delta|_i\Gamma\yields A\odot_i B
\using
\odot_i R
\endprooftree
\end{array}
$$
\caption{Hypersequent Calculus for D}
\label{Dseqcalc}
\end{figure}
The following lemma is useful for the strong completeness results of section~\ref{sectstrongcompldstarimpl}:
\begin{lemma}
Recall that \config{} is a subset of $V^*=(\mathbf{seg}[C]\cup\{1\})^*$. We have that:
\begin{itemize}
\item[i)] \config{} is closed by concatenation and intercalation.
\item[ii)] If $\Delta\in V^*$, $\Gamma\in\config{}$, and $\Delta,\Gamma\in\config$,
then $\Delta\in\config$. Similarly, if we have $\Gamma,\Delta\in\config$ instead of $\Delta,\Gamma\in\config$.
Finally, If $\Delta\in V^*$, $\Gamma\in\config{}$, and $\Delta |_i\Gamma\in\config$,
then $\Delta\in\config$.
\end{itemize}
\label{propshconfigs}
\end{lemma}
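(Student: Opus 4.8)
The plan is to lean on the fact, noted after~(\ref{defhyperconfig}), that the grammar for \config{} is \emph{unambiguous}: every configuration has a unique top-level decomposition into \emph{blocks}, a block being either a metalinguistic separator \sep{}, a sort-$0$ type, or a \emph{segment block} $\oldseg{0}{B},C_1,\oldseg{1}{B},\dots,C_{s(B)},\oldseg{s(B)}{B}$ with $C_1,\dots,C_{s(B)}\in\config{}$, followed by a tail in \config{}. The single observation that drives both items is that \emph{a configuration can never begin with a segment $\oldseg{j}{B}$ for $j\geq1$}: by~(\ref{defhyperconfig}) a nonempty configuration begins with \sep{}, a sort-$0$ type, or $\oldseg{0}{B}$.

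For item~(i) I would argue by structural induction on~(\ref{defhyperconfig}). For concatenation, given $\Gamma,\Delta\in\config{}$ I induct on $\Gamma$: every nonempty clause ends in a recursive \config{} tail, so I push $\Delta$ into that tail (e.g.\ $\sep,\Gamma'$ becomes $\sep,(\Gamma',\Delta)$, and a segment block with tail $\Gamma'$ becomes the same block with tail $(\Gamma',\Delta)$) and apply the induction hypothesis. For intercalation $\Gamma|_k\Phi$ I locate the $k$-th separator: if it is a top-level \sep{} unit, replacing it by $\Phi$ produces a concatenation of configurations, handled by the previous case; if it lies inside an intercalated $C_m$ of a segment block, the induction hypothesis gives $C_m|_{k'}\Phi\in\config{}$ for the appropriate local index $k'$, and resubstituting leaves the block well-formed.

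For item~(ii) the cleanest route is a left-to-right \emph{matching} reading of words of $V^*$ with a stack of obligations: reading $\oldseg{0}{B}$ opens the obligation to meet $\oldseg{1}{B},\dots,\oldseg{s(B)}{B}$ in order at the current depth, reading $\oldseg{j}{B}$ ($j\geq1$) advances (and, when $j=s(B)$, discharges) the innermost open obligation, while \sep{} and sort-$0$ types are neutral. Unambiguity of~(\ref{defhyperconfig}) says exactly that $w\in\config{}$ iff this scan never mismatches and ends with no open obligation; in particular a configuration is \emph{balanced}, restoring the scan to its starting depth without ever dropping below it. Granting this, the three cancellations become one symmetric argument. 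For $\Delta,\Gamma\in\config{}$ with $\Gamma\in\config{}$: $\Gamma$ is balanced and well-nested, so scanning it never touches the obligations left open by $\Delta$ and returns the stack to the depth reached after $\Delta$; since the total scan ends empty, that depth is empty, so $\Delta$ ends with no open obligation, and being a prefix of a mismatch-free scan it never mismatches — hence $\Delta\in\config{}$. The case $\Gamma,\Delta\in\config{}$ is the mirror image (after the balanced $\Gamma$ the stack is empty, so the $\Delta$-part scans exactly as $\Delta$ alone), and for $\Delta|_i\Gamma\in\config{}$ the inserted $\Gamma$ is stack-neutral at the $i$-th separator, so $\Delta|_i\Gamma$ mismatches or ends open iff $\Delta$ does.

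I expect the main obstacle to be the segment-block case of item~(ii): showing the $\Delta$/$\Gamma$ boundary cannot fall \emph{strictly inside} a segment block so as to split it, which is exactly where the ``no configuration begins with $\oldseg{j}{B}$, $j\geq1$'' observation is needed — were the split to leave such a dangling segment at the front of $\Gamma$, $\Gamma$ could not be a configuration. Correspondingly, the one step I would take care to justify is the equivalence of the matching reading with the BNF (so that ``balanced'' and ``well-nested'' may be used freely); this I would prove by a short induction showing that the unique successful scan reconstructs the unique block decomposition of~(\ref{defhyperconfig}).
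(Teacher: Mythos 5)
Your proposal is correct, but it is considerably more of a proof than what the paper itself offers: the paper's entire argument for Lemma~\ref{propshconfigs} is the remark that both items ``are proved via the BNF derivations of (\ref{defhyperconfig})'' and that the details are ``rather tedious but not difficult.'' Your item~(i) coincides with that indicated route: structural induction on the right-recursive BNF, pushing the concatenated configuration into the recursive tail, and locating the $k$-th separator either at top level or inside an intercalated sub-configuration. For item~(ii), however, you introduce machinery the paper never mentions: the stack-based matching scan, characterizing \config{} as the mismatch-free, fully discharged (balanced, well-nested) words of $V^*$. This buys a uniform and symmetric treatment of all three cancellation properties---once one knows that a configuration is stack-neutral in any context, each case reduces to one line of bookkeeping---whereas a direct induction on the BNF derivation of $\Delta,\Gamma$ (presumably the paper's intent) has to confront exactly the awkward point you flag, namely ruling out a segment block of the derivation straddling the $\Delta$/$\Gamma$ boundary, which is where the advertised tedium lives. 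The price of your route is the equivalence lemma between scan-success and BNF membership, and you correctly identify this as the one step needing care; note that the workhorse there is not merely the first-symbol observation (``no configuration begins with $\oldseg{j}{B}$, $j\geq 1$'') but its full strengthening, that a configuration's scan never matches against obligations below its starting depth---which you do invoke as well-nestedness, and which is provable by your proposed induction on length, reconstructing the unique block decomposition from the successful scan. With that lemma in place, all three cancellations go through exactly as you describe.
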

\begin{proof}
Propositions (i) and ii) are both proved via the BNF derivations of (\ref{defhyperconfig}). The details
of the proof are rather tedious but not difficult.
\end{proof}

What is the connection between the calculi $\cD$ and $\hD$? In \cite{valentin:phd} a (faithful) embedding
translation is proved. Let $\Delta$ denote a configuration. We define its \emph{type-equivalent\/} $\Delta^\bullet$,
which is a type which has the same algebraic meaning as $\Delta$. Via the BNF formulation of $\mathcal{O}[\mathbf C]$ in (\ref{defhyperconfig}) one defines recursively $\Delta^\bullet$ as follows:
$$
\begin{array}{lll}
\Lambda^\bullet\bydef I\\
(1,\Gamma)^\bullet\bydef\mathbf J\bullet\Gamma^\bullet\\
(A,\Gamma)^\bullet\bydef A\bullet \Gamma^\bullet\mbox{, if $s(A)=0$}\\
(\oldseg{0}{A},\Delta_1, \cdots, \oldseg{s(A)-1}{A}, \Delta_{s(A)},\oldseg{s(A)}{A}_{s(A)},\Delta_{s(A+1)})^\bullet\bydef\\%&\bydef&
((\cdots(A\odot_1 \Delta_1^\bullet)\cdots)\odot_{1+s(\Delta_1)+\cdots+s(\Delta_{s(A)}}\Delta_{s(A)}^\bullet)\bullet
\Delta_{s(A)+1}^\bullet\mbox{, if $s(A)>0$}
\end{array}
$$
The semantic interpretation of a configuration $\Delta$ (for a given valuation $v$)
is $\widehat v(\Delta)\bydef\widehat v(\Delta^\bullet)$.
The embedding translation is as follows. For any $\Delta\in\mathcal O, \cD\vdash \Delta^\bullet\rightarrow A$ iff
$\hD\vdash \Delta\yields A$. 

% End insertion D rules

% (S,\Sigma)-algebras
%The \emph{displacement algebra} defined by a syntactical algebra %$(L, +, 0, \sep)$
%is the $\omg$-sorted algebra with the $\omg$-sorted signature $\Sigma_D=(\conc, \{\interc{i}\}_{i>0},\zero,\sep)$ with sort 
%functionality $((i,j\rightarrow i+j)_{i,j\geq 0},(i,j\rightarrow i+j-1)_{i>0,j\geq 0},0,1)$:
% The (hyper)sequent calculus and its natural standard syntactic interpretation
%\newpage
\subsection{Some special DAs}\label{subsectspecialdas}
The standard DA $\alg{S}$, induced by the separated monoid with generator set $V=\segtypstar\cup\{1\}$, plays an important
role. The interpretation of the signature \sigmad{} in $\car{S}$ is:
\disp{
$\alg{S}=(V^*,+,\{|_i\}_{i>0},\Lambda,\sep)$
\label{algS}
}
Here, $+$ denotes concatenation, and $\{|_i\}_{i>0}$ i-th intercalation. We have seen in Section~\ref{sectdda} that \config{} is closed by concatenation 
$+$ and intercalation $|_i$, $i>0$, i.e.\ $\mathcal{C}=(\config, +,(|_i)_{i>0},\Lambda,1)$ is a $\sigmad$-subalgebra
of the standard DA $\alg S$.
%\footnote{Observe
%that the sort functionalities of $(,)$ and $|_i$ ($i>0$) are respectively $(i,j\rightarrow i+j)_{i,j\geq 0}$, and 
%$(i,j\rightarrow i+j-1)_{i>0,j\geq 0}$, where $s(0)=0,\mbox{ and }, s(1)=1 $.}
Since $\da$ is a variety,\footnote{We recall that varieties are closed by subalgebras, homomorphic images, and 
direct products.} $\alg{C}$ is a (general) DA, concretely a nonstandard DA. To see that $\alg{C}$ cannot 
be standard we notice that the sort domains of 
$\alg{C}$ are not separated by $\{1\}$. Recall that $\car{C}=\displaystyle\bigcup_{i\geq 0}\config_i$ ($\car{C}_i=\config_i$,
for every $i\geq 0$). We have that:
\disp{$\mbox{For $i>0$,}\car{C}_i\neq\underbrace{\config_0\comp{}\cdots\comp{}\config_0}_{\mbox{$i$ times}}$
\label{disp1}
}
Because, for example, let us take $\vect{p\extract_1p}=\oldseg{0}{p\extract_1 p},1,\oldseg{1}{p\extract_1 p}$, where $p\in\pr_0$. The type
$p\extract_1 p$ has sort $1$, but clearly neither $\oldseg{0}{p\extract_1 p}$ nor $\oldseg{1}{p\extract_1 p}$ are members of $\config_0$. In fact, we have the proper inclusion:
\disp{$\mbox{For $i>0$,}\underbrace{\config_0\comp{}\cdots\comp{}\config_0}_{\mbox{$i$ times}}\subsetneq\car{C}_i$
\label{disp2}
}
It follows that the class of standard DAs is a \emph{proper\/} subclass of the class of general DAs.
%The Lindenbaum algebra $\alg{L}$ defined in the previous Subsection is a nonstandard DA, because again, the sort domains are not
%separated by $\{1\}$. The initial $\sigmad$-algebra $\alg{N}$ in $\da$ is the standard displacement algebra induced by the singleton
%generator set $\{1\}$, where $\alg{N}=(\omg,+^{\cal{N}}, \{\times_i^{\cal{N}}\}_{i>0},0^{\cal{N}},\sep^{\cal{N}})$. The
%elements of the signature are interpreted as follows:
%\disp{
%$
%\begin{array}{lll}
%0^{\cal{N}}&\bydef&0\\
%1^{\cal{N}}&\bydef&1\\
%+^{\cal{N}}(n,m)&\bydef&n+m\mbox{, where $n,m\in\omg$}\\
%\mbox{Where $i>0$, } \times_i^{\cal{N}}(n,m)&\bydef&n+m-1\mbox{, where $n>0$ and $m\geq 0$}
%\end{array}
%\label{interpsignnumalg}
%$
%\label{numdaN}
%}

\subsection{Synthetic Connectives and the Implicative fragment}

\label{subsectdstarimp}

From  a logical point of view, synthetic connectives abbreviate formulas in sequent systems. They form new connectives with left and right sequent rules. Using a linear logic
slogan, synthetic connectives help to eliminate some \techterm{bureaucracy} in Cut-free proofs and in the (syntactic) 
Cut-elimination algorithms (see \cite{valentin:phd}). We consider here a set of synthetic connectives which are of linguistic interest:
\begin{itemize}
\item The binary non-deterministic implications $\nextract$, and $\ninfix$.
\item The unary connectives $\leftproj$, $\rightproj$ and $(\splitk k)_{k>0}$, which are called respectively
left projection, right projection, and split.
\end{itemize}
Together with the binary deterministic implications $\bsl$, /, $(\extract_i)_{i>0}$, $(\infix_i)_{i>0}$,
these constitute what we call
\emph{implicative\/} connectives.
These connectives are incorporated in the recursive definitions of \tp{}, $\mathbf{seg}$, and $\mathcal O$.
We denote this implicative fragment as $\D[\rightarrow]$. We write also $\tp[\rightarrow]$, $\mathbf{seg}[\rightarrow]$,
and $\mathcal O[\rightarrow]$, although, as usual, if the context is clear we will avoid writing $[\rightarrow]$. The 
intuitive semantic interpretation of the {implicative} connectives can be found in Figure~\ref{synconndefs}.
Figure~\ref{chapt4hDrulesdetconns} and Figure~\ref{chapt4hDrulesnondetconns} correspond to their hypersequent rules.

Besides the usual continuous and discontinuous implications, the nondeterministic discontinous implications are used to
account for particle shift nondeterminism where the object can be intercalated between the verb and the particle,
or after the particle. For a particle verb like \emph{call$+\sep+$up\/} we can give the lexical assigment $\leftproj(\splitk 1(N\bsl S)\nextract N)$. Projections can be used to account for the cross-serial dependencies of Dutch. The split connective
can be used for parentheticals like \emph{fortunately\/} with the type assignment $\splitk 1 S\infix_1 S$.

% Insertion D with defined connectives
%\disp{
\begin{figure}[h]
\begin{center}
$
\begin{array}{rcll}
%\mbox{The deterministic connectives}\\
\oscott\leftproj A\cscott & \bydef & \oscott A\cscott\myover \mydiscunit & \mbox{left projection}\\
\oscott\rightproj A\cscott & \bydef & \mydiscunit\myunder \oscott A\cscott & \mbox{right projection}\\
\oscott\splitk{i} A\cscott & \bydef & \oscott A\cscott\extract_i \myconunit & \mbox{$i$-th split}\\
\oscott B\nextract A\cscott & \bydef &\oscott B\extract_1 A\cscott\,\cap\,\cdots\,\cap\,\oscott B\extract_{s(B)-s(A)+1} A\cscott & \mbox{nondeterministic extract}\\
\oscott A\ninfix B\cscott & \bydef & \oscott A\infix_1 B\cscott\,\cap\,\cdots\,\cap\,\oscott A\infix_{s(B)-s(A)+1} B\cscott & \mbox{nondeterministic infix}\\
\end{array}
$
\end{center}
\caption{Semantic interpretation in standard DAs for the set of synthetic connectives}
\label{synconndefs}
\end{figure}
%}
% End insertion D with defined connectives
% Formulation hypersequent rules for defined connectives
\begin{figure}[h]
$
\prooftree
\Gamma\langle\vect A\rangle\yields B
\justifies
\Gamma\langle\vect{\leftproj A},\sep\rangle\yields B
\using\leftproj L
\endprooftree
\tb
\prooftree
\Gamma,\sep\yields A
\justifies
\Gamma\yields \leftproj A
\using\leftproj R
\endprooftree
$
\vtab
$
\prooftree
\Gamma\langle\vect A\rangle\yields B
\justifies
\Gamma\langle\sep,\vect{\rightproj A}\rangle\yields B
\using\rightproj L
\endprooftree
\tb
\prooftree
\sep,\Gamma\yields A
\justifies
\Gamma\yields \rightproj A
\using\rightproj R
\endprooftree
$
\vtab
$\prooftree
\Delta\langle\vect{B}\rangle\yields C
\justifies
\Delta\langle\vect{\splitk{i}B}|_i\Lambda\rangle\yields C
\using\splitk{i} L
\endprooftree
\tb
\prooftree
\Delta|_i \Lambda\yields B
\justifies
\Delta\yields \splitk{i}B
\using \mysplit{}R
\endprooftree
$
\caption{Hypersequent rules for synthetic unary implicative connectives}%deterministic synthetic connectives}
\label{chapt4hDrulesdetconns}
\end{figure}
% Nondeterministic synthetic rules
\begin{figure}
$
\prooftree
\Delta\yields A
\tb
\Gamma\langle\vect B\rangle \yields C
\justifies
\Gamma\langle\vect{ B\nextract A}|_i\Gamma\rangle \yields C
\using\nextract L
\endprooftree
\tb
\prooftree
\Delta|_1\vect A\yields B
\tb
\cdots
\tb
\Delta|_d\vect A\yields B
\justifies
\Delta\yields B\nextract A
\using\nextract R
\endprooftree
$
\vtab
$
\prooftree
\Delta\yields A
\tb
\Gamma\langle\vect B\rangle \yields C
\justifies
\Gamma\langle\Gamma|_i\vect{ A\ninfix B}\rangle \yields C
\using\ninfix L
\endprooftree
\tb
\prooftree
\vect A|_1\Delta\yields B
\tb
\cdots
\tb
\vect A|_a\Delta\yields B
\justifies
\Delta\yields A\ninfix B
\using\ninfix R
\endprooftree
$
\caption{Hypersequent calculus rules for nondeterministic synthetic connectives}
\label{chapt4hDrulesnondetconns}
\end{figure}

%
%\section{Strong Completeness w.r.t.\ Powerset Residuated Displacement Algebras over Standard Displacement Algebras}
\section{Strong Completeness of the implicative fragment w.r.t.\ L-models}\label{sectstrongcompldstarimpl}
In this section we prove two strong completeness theorems in relation to the
implicative fragment.
In order to prove them, we demonstrate first strong completeness of $\hD[\rightarrow]$ w.r.t.\ powerset residuated DAs over standard DAs with a countable set of generators.
% Strong completeness of \Dstartimp w.r.t. the class \prstd{}

Let $V=\segtypstar[\rightarrow]\cup\{\sep\}$. Clearly, $V$ is countably infinite since $\segtypstar[\rightarrow]$ is the countable union 
$\displaystyle \bigcup_i \segtypstar[\rightarrow]_i$, where each
$\segtypstar[\rightarrow]_i$ is also countably infinite. Let us consider the standard DA $\alg{S}$
(from~(\ref{algS})), induced by the (countably) infinite set of generators $V$:  
$$\alg{S}=(V^*,+,\{|_i\}_{i>0},\Lambda,\sep)$$
We define some notation:
\begin{definition}
For any type $C\in\tp[\rightarrow]$ and set $R$ of non-logical axioms:
$$
\begin{array}{lll}
\invprov{C}_R&\bydef&\{\Delta:\Delta\in\config\mbox{ and }R\vdash \Delta\yields C\}
\end{array}
$$
%\label{definvprov}
\end{definition}
In practice, when the set of hypersequents $R$ is clear from the context, we simply write $\invprov C$ instead
of $\invprov{C}_R$. 
\begin{lemma}\rm{(\textbf{Truth Lemma})}\label{truthlemma}\\\\
Let $\powerset{S}$ be the powerset residuated DA over the standard DA $\alg{S}$ from~(\ref{algS}). 
Let $v_R$ be the following valuation on the powerset $\powerset{S}$:
$$
\begin{array}{lll}
\mbox{For every }p\in\pr,\mbox{ }v_R(p)&\bydef&\invprov{p}_R
\end{array}
$$
Let $\model{M}=(\powerset{S},v_R)$ be called as usual the canonical model. 
The following equality holds:
$$
\begin{array}{lll}
\mbox{For every }C\in\tp[\rightarrow],\mbox{ }{\widehat{v}}_R(C)&=&\invprov{C}_R
\end{array}
$$
\end{lemma}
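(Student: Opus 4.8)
**The plan is to prove the equality $\widehat{v}_R(C) = \invprov{C}_R$ by induction on the structure of the type $C\in\tp[\rightarrow]$.** The base case is atomic types $p\in\pr$, where the equality $\widehat{v}_R(p) = v_R(p) = \invprov{p}_R$ holds by definition of the canonical valuation. For the inductive step I would treat each implicative connective, using the semantic interpretation of the connectives (from Figure~\ref{typeint} and Figure~\ref{synconndefs}) on the left-hand side and the corresponding hypersequent left/right rules (Figure~\ref{Dseqcalc}, Figure~\ref{chapt4hDrulesdetconns}, Figure~\ref{chapt4hDrulesnondetconns}) on the right-hand side, appealing to the induction hypothesis for immediate subtypes.

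**The connectives split into two groups, and I would handle a representative from each in detail.** For the deterministic binary implications, take $C = A\bsl B$ as the paradigmatic case (the cases $B/A$, $\extract_i$, $\infix_i$ are symmetric or analogous). By definition $\widehat{v}_R(A\bsl B) = \widehat{v}_R(A)\myunder\widehat{v}_R(B)$, and by the induction hypothesis this is $\invprov{A}_R\myunder\invprov{B}_R = \{\Delta : \text{for every } \Gamma\in\invprov{A}_R,\ \Gamma,\Delta\in\invprov{B}_R\}$. I must show this equals $\invprov{A\bsl B}_R = \{\Delta\in\config : R\vdash\Delta\yields A\bsl B\}$. The inclusion $\invprov{A\bsl B}_R\subseteq\widehat{v}_R(A\bsl B)$ follows by Cut (available, since $\hD$ admits Cut elimination): if $R\vdash\Delta\yields A\bsl B$ and $\Gamma\vdash A$, then $\vect A,\Delta\yields B$ is derivable, and composing with $\Gamma\yields A$ via Cut gives $\Gamma,\Delta\yields B$. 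The reverse inclusion uses the $\bsl R$ rule: if $\Delta$ lies in the set on the semantic side, then since $\vect A\in\invprov{A}_R$ (because $\vect A\yields A$ is the axiom, so $A\in\widehat{v}_R(A)$ at the segment level), we get $\vect A,\Delta\yields B$, whence $\Delta\yields A\bsl B$ by $\bsl R$. For the nondeterministic implications $\nextract$, $\ninfix$ and the unary connectives $\leftproj,\rightproj,\splitk k$, I would run the same two-inclusion argument, taking care that the semantic side is an intersection (for $\nextract,\ninfix$) matching the multiple premises of the corresponding right rule.

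**The main obstacle will be ensuring that the sets produced on the semantic side genuinely consist of configurations rather than arbitrary preconfigurations.** The residuated operations $\myunder,\myover,\myextract_i,\myinfix_i$ are defined on subsets of the full carrier $V^*$ of $\alg S$, so a priori $\invprov{A}_R\myunder\invprov{B}_R$ could contain proper preconfigurations $\Delta\in V^*\setminus\config$; but $\invprov{A\bsl B}_R$ contains only genuine configurations by definition. This is exactly where Lemma~\ref{propshconfigs} is indispensable: part (ii) guarantees that if $\Gamma\in\config$ and $\Gamma,\Delta\in\config$ then $\Delta\in\config$ (and likewise for the intercalation $\Gamma|_i\Delta$), so taking any fixed $\Gamma\in\invprov{A}_R\subseteq\config$ forces every $\Delta$ on the semantic side to be a true configuration. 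Thus the semantic-side set is automatically contained in $\config$, and the two characterisations can be matched. I would flag this compatibility between the ambient algebra $\alg S$ and the subalgebra $\alg C$ of configurations as the delicate point that distinguishes the $\D$ construction from the simpler Lambek case, precisely as the introduction promised.
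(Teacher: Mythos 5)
Your proposal follows essentially the same route as the paper's own proof: induction on the structure of $C$, treating each implicative connective by two inclusions---one obtained by Cut against an instance of the corresponding left rule applied to eta-expanded identities $\vect{A}\yields A$, the other by the corresponding right rule after using Lemma~\ref{propshconfigs} to conclude that $\Delta$ is a genuine configuration (the paper's notion of ``correct''), which is exactly the delicate point you flag. The only blemishes are cosmetic: $\vect{A}\yields A$ is an axiom only for primitive $A$ (for compound types it is eta-expansion, proved by induction on types, as the paper notes in a footnote), and Cut is available in derivations from $R$ as part of the deductive apparatus rather than ``by Cut elimination,'' which is not in general valid in the presence of non-logical axioms.
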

\begin{proof}
We proceed by induction on the structure of type $C$; we will write $\widehat v$ instead of $\widehat{v_R}$, and $\invprov{\cdot}$ instead of $\invprov{\cdot}_R$;
we will say that an element $\Delta\in\widehat v(A)$ is
\emph{correct}\footnote{Recall that a priori $\Delta\in\car S$, which is equal to 
$(\segtyp[\rightarrow]\cup\{\sep\})^*$.} iff $\Delta\in\config[\mathbf C]$.
\begin{itemize}
\item[$\product$] $C$ is primitive. True by definition.\\
\item[$\product$] $C=B\extract_i A$. Let us see:
$$\invprov{B\extract_i A}\subseteq \widehat v(B\extract_i A)$$
Let $\Delta$ be such that $R\vdash \Delta\yields B\extract_i A$. 
Let $\Gamma_A\in\widehat v (A)$. By induction hypothesis (i.h.), $\widehat v (A)=\invprov{A}$.
Hence, $R\vdash\Gamma_A\yields A$
We have:
$$
\prooftree
\Delta\yields B\extract_i A
\tb
\vect{B\extract_i A}|_i \Gamma_A\yields B
\justifies
\Delta |_i\Gamma_A\yields B
\using Cut
\endprooftree
$$
By i.h., $\widehat v(B)=\invprov{B}$. It follows that $\Delta |_i\Gamma_A\in\widehat v(B)$, hence 
$\Delta\in\widehat v(B\extract_i A)$. Whence, $\invprov{B\extract_i A}\subseteq \widehat v(B\extract_i A)$.\\

Conversely, let us see:
$$\widehat v(B\extract_i A)\subseteq\invprov{B\extract_i A}$$
Let $\Delta\in\widehat v(B\extract_i A)$. By i.h. $\widehat v(A)=\invprov{A}$. For any type $A$, we have eta-expansion,
i.e.\ $\vect A\yields A$\footnote{By simple induction on the structure of types.}. Hence, $\vect A\in\widehat v(A)$. We have that 
$\Delta |_i\vect A\in\widehat v(B)$. By i.h., $\Delta |_i\vect A\yields B$. Since $\vect A$ is correct,
and by i.h.\ $\Delta |_i\vect A$ is correct, by Lemma~\ref{propshconfigs}, $\Delta$ is correct. By applying the $\extract_i$ right rule to the provable hypersequent $\Delta |_i\vect A\yields B$ we get:
$$\Delta\yields B\extract_i A$$ 
This ends the case of $B\extract_i A$.\\
\item[$\product$] $C= A\infix_i B$. Completely similar to case $B\extract_i A$.\\
\item[$\product$] $C= B/A$ or $A\bsl B$. Similar to the disconcontuous case.
\commentout{
Let us see:
$$\invprov{B/ A}\subseteq \widehat v(B/A)$$
Let $\Delta$ be such that $R\vdash \Delta\yields B/A$. 
Let $\Gamma_A\in\widehat v(A)$. By i.h., $\widehat v(A)=\invprov{A}$.
Hence, $R\vdash\Gamma_A\yields A$
We have:
$$
\prooftree
\Delta\yields B/A
\tb
\vect{B/A},\Gamma_A\yields B
\justifies
\Delta, \Gamma_A\yields B
\using Cut
\endprooftree
$$
By i.h., $\widehat v(B)=\invprov{B}$. It follows that $\Delta, \Gamma_A\in\widehat v(B)$. Whence, 
$\invprov{B/A}\subseteq \widehat v(B/A)$.\\

Conversely, let us see:
$$\widehat v(B/A)\subseteq\invprov{B/ A}$$
Let $\Delta\in\widehat v(B/A)$. By i.h. $\widehat v(A)=\invprov{A}$. For any type $A$, we have eta-expansion,
i.e.\ $\vect A\yields A$. Hence, $\vect A\in\widehat v(A)$. We have that 
$\Delta,\vect A\in\widehat v(B)$. By i.h., $\Delta,\vect A\yields B$. Since $\vect A$ is correct,
and by i.h.\ $\Delta,\vect A$ is correct, by Lemma~\ref{propshconfigs} $\Delta$ is correct. By applying the $/$ right rule to the provable hypersequent $\Delta,\vect A\yields B$ we get:
$$\Delta\yields B/A$$ 
This ends the case of $B/A$.\\
\item[$\product$] $C= A\bsl B$. Completely similar to the case $C=B/A$.\\
}
% Nondeterministic connectives
\item[$\product$] Nondeterministic connectives. Consider the case $C= B{{\Uparrow}} A$.\\
$$\invprov{B{{\Uparrow}}A}\subseteq \widehat v(B{{\Uparrow}} A)$$
Let $\Gamma_A\in \widehat v(A)$. By i.h, $\Gamma_A\yields A$. Let $\Delta\yields B{\Uparrow} A$. By $s(B)-s(A)+1$ applications of ${\Uparrow}$ left rule, we have
$$
\prooftree
\Gamma_A\yields A
\tb 
\vect B\yields B\mbox{, by eta-expansion}
\justifies
\vect{B{\Uparrow} A}|_i\Gamma_A\yields B, \mbox{ for }i=1,\cdots,s(B)-s(A)+1
\using{\Uparrow} L
\endprooftree
$$
By $s(B)-s(A)+1$ Cut applications with $\Delta\yields B{\Uparrow} A$, we get:
$$\Delta |_i\Gamma_A\yields B
$$
Hence, for $i=1,\cdots s(B)-s(A)+1$, by i.h.\ $\Delta |_i\Gamma_A\in\widehat v(B)$.
Hence, $\Delta\in\widehat v(B{{{\Uparrow}}} A)$.\\

Conversely, let us see:
$$\widehat v(B{\Uparrow} A)\subseteq \invprov{B{\Uparrow} A}$$
By i.h, we see that $\vect A\in\widehat v(A)$. Let $\Delta\in\widehat v(B{\Uparrow} A)$. This means that
for every $i=1,\cdots,s(B)-s(A)+1$ $\Delta |_i\vect A\in\widehat v(B)$. By i.h., 
$\Delta|_i\vect A\yields B$. By a similar reasoning to the deterministic case $C=B\extract_i A$, we see
that $\Delta$ is correct. We have that:
$$
\prooftree
\Delta|_1\vect A\yields B
\tb
\cdots
\tb
\Delta|_{s(B)-s(A)+1}\vect A\yields B
\justifies
\Delta\yields B{\Uparrow} A
\using{\Uparrow} R
\endprooftree
$$
\item[$\product$] $C=A{\Downarrow} B$ is completely similar to the previous one.\\

%Let us see the cases corresponding to the unary (implicative) connectives.
\item[$\product$] $C=\leftproj A$. Let us see:
$$[\leftproj A]\subseteq \widehat v(\leftproj A)$$
Let $\Delta\in[\leftproj A]$. Hence, $\Delta\yields\leftproj A$. We have that:
$$
\prooftree
\Delta\yields\leftproj A
\tb
\prooftree
\vect A\yields A
\justifies
\vect{\leftproj A},1\yields A
\using\leftproj L
\endprooftree
\justifies
\Delta,1\yields A
\using Cut
\endprooftree
$$
By i.h., $\Delta,1\in\widehat v(A)$. Hence, $\Delta\in\widehat v(\leftproj A)$.\\

Conversely, let us see:
$$\widehat v(\leftproj A)\subseteq[\leftproj A]$$
Let $\Delta\in\widehat v(\leftproj A)$. By definition, $\Delta,1\in\widehat v(A)$. By i.h., $\Delta,1\yields A$, and 
by lemma~\ref{propshconfigs}, $\Delta$ is correct. By application of $\leftproj$ right rule, we get:
$$\Delta\yields\leftproj A$$
This proves the converse.\\
\item[$\product$] $C=\rightproj A$ is completely similar to the previous one.\\
\item[$\product$] $C= \splitk{k}\!A$. Let us see:
$$\invprov{\splitk{k}\!A}\subseteq\widehat v(\splitk{k}\!A)$$
Let $\Delta\yields\splitk{k}\!A$. We have that:
$$
\prooftree
\Delta\yields\splitk i\! A
\tb
\prooftree
\vect A\yields A
\justifies
\vect{\splitk i A}|_k\Lambda\yields A
\using\splitk k L
\endprooftree
\justifies
\Delta|_k\Lambda\yields A
\using Cut
\endprooftree
$$
By i.h., $\Delta\in\widehat v(\splitk k A)$.\\

Conversely, let us see that:
$$\widehat v(\splitk{k}\!A)\subseteq\invprov{\splitk{k}\!A}$$
Let $\Delta\in\widehat v(\splitk{k}\!A)$. By definition, $\Delta|_k\Lambda\in\widehat v(A)$. By i.h. and lemma~\ref{propshconfigs},
$\Delta$ is correct and $\Delta|_k\Lambda\yields A$. By application of the $\splitk k$ right rule:
$$\Delta\yields \splitk k A$$
Hence, $\Delta\in\invprov{\splitk k A}$.\qed
\end{itemize}
% Identity lema
\end{proof}
By induction on the structure of \config{}, see~(\ref{defhyperconfig}), one proves the following lemma:
\begin{lemma}\rm{(\textbf{Identity lemma})}\label{identlemma}\\\\
For any $\Delta\in\config$, $\Delta\in\widehat v(\Delta)$.
\end{lemma}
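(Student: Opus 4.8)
The plan is to prove the Identity Lemma by induction on the BNF structure of configurations given in (\ref{defhyperconfig}), showing for each clause that $\Delta \in \widehat{v}(\Delta)$, where recall $\widehat{v}(\Delta) = \widehat{v}(\Delta^\bullet)$ via the type-equivalent translation. Throughout I will freely use the Truth Lemma (Lemma~\ref{truthlemma}), so that for each type $C$ we know $\widehat{v}(C) = \invprov{C}_R$, and I will use that membership $\Delta \in \widehat{v}(C)$ is equivalent to provability $R \vdash \Delta \yields C$.

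First I would handle the base cases. For $\Delta = \Lambda$, the type-equivalent is $\Lambda^\bullet = I$, and since $\Lambda \yields I$ is exactly the $IR$ rule, we get $\Lambda \in \invprov{I}_R = \widehat{v}(I) = \widehat{v}(\Lambda)$. For the separator case $\Delta = (\sep, \Gamma)$, the type-equivalent is $\mathbf{J} \bullet \Gamma^\bullet$; here $\sep \yields J$ holds by $JR$, and combined with the inductive hypothesis $\Gamma \in \widehat{v}(\Gamma)$ (i.e.\ $\Gamma \yields \Gamma^\bullet$) and an application of $\bullet R$, we obtain $\sep, \Gamma \yields J \bullet \Gamma^\bullet$, giving membership. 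The case $\Delta = (A, \Gamma)$ with $s(A) = 0$ is analogous: the type-equivalent is $A \bullet \Gamma^\bullet$, and $\vect{A} = A \yields A$ holds (for atomic $A$ by the axiom, in general by eta-expansion), so again $\bullet R$ combines it with the inductive hypothesis on $\Gamma$.

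The main work, and the principal obstacle, is the last clause, where $\Delta$ has the form
$$\oldseg{0}{A}, \Delta_1, \cdots, \oldseg{s(A)-1}{A}, \Delta_{s(A)}, \oldseg{s(A)}{A}_{s(A)}, \Delta_{s(A)+1}$$
with $s(A) > 0$, whose type-equivalent is
$$((\cdots(A \odot_1 \Delta_1^\bullet)\cdots) \odot_{1 + s(\Delta_1) + \cdots + s(\Delta_{s(A)})} \Delta_{s(A)}^\bullet) \bullet \Delta_{s(A)+1}^\bullet.$$
Here the inductive hypothesis supplies $\Delta_j \yields \Delta_j^\bullet$ for each intercalated subconfiguration $\Delta_j$ and for the trailing $\Delta_{s(A)+1}$. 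The plan is to start from the figure $\vect{A} \yields A$ (eta-expansion) and build up the full configuration by repeatedly applying the $\odot_i R$ rule to plug each $\Delta_j^\bullet$ into the $j$-th separator position, and finally the $\bullet R$ rule to append the trailing block. The delicate bookkeeping — and where care is required — is tracking how the subscripts on the discontinuous products shift: after intercalating $\Delta_1, \dots, \Delta_{j-1}$, the position of the next separator is displaced by $s(\Delta_1) + \cdots + s(\Delta_{j-1})$, which is exactly why the index on the $j$-th product is $1 + s(\Delta_1) + \cdots + s(\Delta_{j-1})$. I would verify that each $\odot_i R$ application is well-formed at the shifted index and that the intercalation operation $|_i$ on configurations mirrors the algebraic $\comp{i}$, so that the hypersequent we assemble is precisely $\Delta \yields \Delta^\bullet$.

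Assembling these cases completes the induction. The only genuinely nontrivial point is the index arithmetic in the discontinuous clause; the remaining cases reduce to single applications of the right rules $IR$, $JR$, $\bullet R$, together with eta-expansion and the inductive hypotheses, all of which are routine given the Truth Lemma's identification of $\widehat{v}(C)$ with $\invprov{C}_R$.
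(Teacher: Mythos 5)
Your induction skeleton (BNF induction on \config{}, eta-expansion for the figures, the index bookkeeping in the last clause) is the right one, but the execution has a genuine gap: at every step you convert between membership $\Delta\in\widehat v(\Delta^\bullet)$ and provability $R\vdash\Delta\yields\Delta^\bullet$ by appeal to the Truth Lemma, and the Truth Lemma does not apply there. Lemma~\ref{truthlemma} is stated and proved only for $C\in\tp[\rightarrow]$, whereas the type-equivalent $\Delta^\bullet$ is built from $I$, $J$, $\bullet$ and $\odot_k$ --- none of these is an implicative connective, and none has a case in the induction proving the Truth Lemma. Worse, the direction you need (provability of $\Delta\yields C$ implies $\Delta\in\widehat v(C)$) is exactly the direction that is hopeless for product-shaped $C$ in this kind of canonical model: for instance, in a setting with $\bullet$ and a non-logical axiom $p\yields a\bullet b$ ($p,a,b$ primitive of sort $0$), $p$ belongs to $\invprov{a\bullet b}_R$ but not to $\widehat v(a)\cdot\widehat v(b)$, since $p$ admits no splitting into a configuration deriving $a$ followed by one deriving $b$. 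This failure is precisely why the paper confines its completeness theorem to the implicative fragment, so the gap cannot be patched by \scare{extending} the Truth Lemma; as written, your argument establishes the (true, but different) fact $R\vdash\Delta\yields\Delta^\bullet$ and then asserts without justification the bridge back to membership --- which is the very content of the Identity Lemma.

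The repair is to delete the detour through provability and run the same induction \emph{semantically}, inside the powerset algebra $\powerset{S}$. Unfolding $\widehat v(\Delta)\bydef\widehat v(\Delta^\bullet)$ homomorphically gives $\widehat v(\Lambda)=\mathbb I=\{\Lambda\}$, $\widehat v(\sep,\Gamma)=\mathbb J\cdot\widehat v(\Gamma)=\{\sep+x:x\in\widehat v(\Gamma)\}$, $\widehat v(A,\Gamma)=\widehat v(A)\cdot\widehat v(\Gamma)$ for $s(A)=0$, and, in the last clause, a composition of the operations $\comp{k}$ and one $\cdot$ applied to $\widehat v(A),\widehat v(\Delta_1),\dots,\widehat v(\Delta_{s(A)+1})$. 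The only place the Truth Lemma is needed --- and where it is legitimately available --- is at the type-occurrences $A$, which \emph{are} implicative types: eta-expansion gives $\vect A\yields A$, hence $\vect A\in\invprov{A}_R=\widehat v(A)$. Combining these elements with the inductive hypotheses $\Delta_j\in\widehat v(\Delta_j)$ via concatenation $+$ and intercalation $|_k$ of the algebra $\alg S$ (rather than via the rules $JR$, $\bullet R$, $\odot_k R$) produces exactly the element $\Delta$, with the same index arithmetic you describe. That purely algebraic argument is what the paper's one-line proof by induction on (\ref{defhyperconfig}) intends.
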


Let $(A_i)_{i=1,\cdots,n}$ be the sequence of type-occurrences in a configuration $\Delta$. Let 
$\Delta\subst{\Gamma_1\cdots \Gamma_n}{A_1\cdots A_n}$ be the result of replacing every type-occurrence
$A_i$ with $\Gamma_i$.
Recall that we have fixed a set of hypersequents $R$. We have the lemma:
\begin{lemma}
$\model{M}=(\powerset{S},v)\models R$
\label{truthlemma2}
\end{lemma}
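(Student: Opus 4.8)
The plan is to unwind the satisfaction relation and then feed the result into the Truth Lemma and Cut. A model satisfies a hypersequent $\Gamma\yields A$ exactly when $\widehat v(\Gamma)\subseteq\widehat v(A)$ in the order of $\powerset S$, where, as fixed above, $\widehat v(\Gamma)\bydef\widehat v(\Gamma^\bullet)$. So $\model M\models R$ reduces to proving, for every non-logical axiom $(\Gamma\yields A)\in R$, the inclusion $\widehat v(\Gamma)\subseteq\widehat v(A)$. The right-hand side is identified immediately by the Truth Lemma (Lemma~\ref{truthlemma}), since $A\in\tp[\rightarrow]$: one has $\widehat v(A)=\invprov{A}=\{\Delta\in\config:R\vdash\Delta\yields A\}$. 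Thus it suffices to show that every $\Delta\in\widehat v(\Gamma)$ satisfies $R\vdash\Delta\yields A$. Note that one cannot simply invoke the Truth Lemma on $\Gamma^\bullet$, because $\Gamma^\bullet$ is built with the products $\bullet$ and $\odot_i$, which lie outside the implicative fragment; the analysis must instead descend to the type-occurrences of $\Gamma$.

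First I would analyse the membership $\Delta\in\widehat v(\Gamma)=\widehat v(\Gamma^\bullet)$ structurally. Let $A_1,\dots,A_n$ be the type-occurrences of $\Gamma$. Since $\Gamma^\bullet$ is assembled from the $A_i$ by $\bullet$, $\odot_i$ and the units, and since in $\powerset S$ these are interpreted by the pointwise operations $\cdot$ and $\comp{i}$ of~(\ref{defcompcompi}), whose underlying operations on $\car S=V^*$ are concatenation and intercalation, an induction on the BNF structure of $\Gamma$ (mirroring the recursion defining $\Gamma^\bullet$) shows that any $\Delta\in\widehat v(\Gamma)$ decomposes as $\Delta=\Gamma\subst{\Gamma_1\cdots\Gamma_n}{A_1\cdots A_n}$ for some $\Gamma_i\in\widehat v(A_i)$; the units contribute the fixed symbols $\Lambda$ and $\sep$ of the skeleton, and the combination of the $\Gamma_i$ by $+$ and $|_i$ reproduces exactly the segment-and-separator pattern of $\Gamma$. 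Applying the Truth Lemma to each $A_i$ gives $\widehat v(A_i)=\invprov{A_i}$, so each $\Gamma_i$ is automatically a correct configuration with $R\vdash\Gamma_i\yields A_i$. (The reverse inclusion of this structural claim is the content that generalises the Identity Lemma, Lemma~\ref{identlemma}, and is not needed here.)

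With the decomposition in hand I would conclude by Cut. Because $\Gamma\yields A$ is a non-logical axiom, $R\vdash\Gamma\yields A$. Cutting the provable hypersequent $\Gamma_1\yields A_1$ against the figure $\vect{A_1}$ occurring in $\Gamma$ replaces $\vect{A_1}$ by $\Gamma_1$; iterating over $A_1,\dots,A_n$ yields $R\vdash\Gamma\subst{\Gamma_1\cdots\Gamma_n}{A_1\cdots A_n}\yields A$, that is, $R\vdash\Delta\yields A$. Hence $\Delta\in\invprov{A}=\widehat v(A)$, giving $\widehat v(\Gamma)\subseteq\widehat v(A)$ and therefore $\model M\models R$.

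The main obstacle is the structural step of the second paragraph: making precise that an arbitrary inhabitant of $\widehat v(\Gamma^\bullet)$ is \emph{forced} into the substitution form $\Gamma\subst{\Gamma_1\cdots\Gamma_n}{A_1\cdots A_n}$ with each $\Gamma_i\in\widehat v(A_i)$. This requires a careful induction tracking how the pointwise product $\cdot$ and the intercalations $\comp{i}$ in $\powerset S$ slice $\Delta$ into pieces aligned precisely with the figures of the $A_i$, and keeping the bookkeeping of separators and segment boundaries straight across the continuous and discontinuous cases. Everything downstream is a routine use of the Truth Lemma on the $A_i$ and of admissible Cut.
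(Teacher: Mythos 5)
Your proposal is correct and follows essentially the same route as the paper's own proof: identify $\widehat v(A)$ as $\invprov{A}_R$ via the Truth Lemma, observe that every element of $\widehat v(\Delta)$ has the substitution form $\Delta\subst{\Gamma_1\cdots\Gamma_n}{A_1\cdots A_n}$ with each $\Gamma_i\in\widehat v(A_i)$ (hence $R\vdash\Gamma_i\yields A_i$), and conclude with $n$ applications of Cut against the non-logical axiom $\Delta\yields A$. The only difference is one of emphasis: the paper simply asserts the identity $\widehat v(\Delta)=\{\Delta\subst{\Gamma_1\cdots\Gamma_n}{A_1\cdots A_n}:\Gamma_i\in\widehat v(A_i)\}$ without argument, whereas you correctly isolate the inclusion you actually need as the step requiring a structural induction on the BNF shape of $\Delta$.
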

\begin{proof}
Let $(\Delta\yields A)\in R$. For every type-occurrence $A_i$ in $\Delta$ (we suppose that the sequence of type
occurrences in $\Delta$ is $(A_i)_{i=1,\cdots,n}$), we have by the Truth Lemma that
$\widehat v(A_i)=\invprov{A_i}_R$. For any $\Gamma_i\in\widehat v(A_i)$, we have by the 
Truth Lemma that $R\vdash \Gamma_i\yields A_i$. Since $(\Delta\yields A)\in R$, we have then that
$R\vdash\Delta\yields A$. By $n$ applications of the Cut rule with the premises $\Gamma_i$ we get from
$R\vdash\Delta\yields A$ that $R\vdash\Delta\subst{\Gamma_1\cdots \Gamma_n}{A_1\cdots A_n}\yields A$.
We have that $\widehat v(\Delta)=\{\Delta\subst{\Gamma_1\cdots \Gamma_n}{A_1\cdots A_n}:\Gamma_i\in
\widehat v(A_i)\}$. Since, we have $R\vdash\Delta\subst{\Gamma_1\cdots \Gamma_n}{A_1\cdots A_n}\yields A$,
again by the Truth Lemma, $\Delta\subst{\Gamma_1\cdots \Gamma_n}{A_1\cdots A_n}\in\widehat v(A)$. 
We have then that $\widehat v(\Delta)\subseteq \widehat v(A)$. We are done.\qed
\end{proof}
% Strong completeness of \Dstarimp w.r.t. \prsd{}
\begin{theorem}
\Dstarimp{} is strongly complete w.r.t.\ the class \prsd{}.
\label{compldstarimprstd}
\end{theorem}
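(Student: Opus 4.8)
The plan is to establish strong completeness, i.e.\ to show that for any set $R$ of non-logical axioms and any hypersequent $\Delta\yields A$, if $\model{M}\models R$ implies $\model{M}\models \Delta\yields A$ for every $\model{M}\in\prsd$, then $R\vdash_{\hD[\rightarrow]}\Delta\yields A$. I would argue by contraposition, building the canonical model already assembled in the preceding lemmas and exhibiting a single $\prsd$-model that satisfies $R$ but refutes any non-derivable hypersequent. Concretely, take the standard DA $\alg{S}=(V^*,+,\{|_i\}_{i>0},\Lambda,\sep)$ over the countably infinite generator set $V=\segtypstar[\rightarrow]\cup\{\sep\}$, form the powerset residuated DA $\powerset{S}\in\prsd$ over it, and equip it with the canonical valuation $v_R$ with $v_R(p)=\invprov{p}_R$.

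The heavy lifting is done by the three lemmas already proved. By Lemma~\ref{truthlemma} (the Truth Lemma), for every $C\in\tp[\rightarrow]$ we have $\widehat{v}_R(C)=\invprov{C}_R$, so the semantic value of every type is exactly its set of derivable antecedent configurations. By Lemma~\ref{truthlemma2} the canonical model $\model{M}=(\powerset{S},v_R)$ satisfies $R$, i.e.\ $\model{M}\models R$. First I would note that soundness is immediate since we work with the categorical presentation and its faithful embedding into $\hD$ (recall $\cD\vdash\Delta^\bullet\rightarrow A$ iff $\hD\vdash\Delta\yields A$), so every derivable hypersequent holds in every $\prsd$-model; strong completeness is the converse and is what remains.

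For the completeness direction, suppose $R\nvdash\Delta\yields A$. I would use the Identity Lemma (Lemma~\ref{identlemma}), which gives $\Delta\in\widehat{v}_R(\Delta)$ for any configuration $\Delta$. The model validates $\Delta\yields A$ iff $\widehat{v}_R(\Delta)\subseteq\widehat{v}_R(A)$. Since $\Delta\in\widehat{v}_R(\Delta)$, validity would force $\Delta\in\widehat{v}_R(A)=\invprov{A}_R$, which by definition means $R\vdash\Delta\yields A$, contradicting our assumption. Hence $\model{M}\not\models\Delta\yields A$ while $\model{M}\models R$, exhibiting a $\prsd$-countermodel and completing the argument.

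The only genuine subtlety—and the step I would flag as the main obstacle—lies not in the final stitching above but in the interplay between the carrier $V^*$ of $\alg{S}$ and the subset \config{} of \emph{correct} (genuinely well-formed) configurations. The Truth Lemma is stated over all of $\car{S}=V^*$, yet $\invprov{C}_R$ ranges only over $\Delta\in\config$; the correctness bookkeeping, handled throughout via Lemma~\ref{propshconfigs}, is what guarantees that eta-expanded figures $\vect A$ are correct and that the right-rule applications remain legitimate. I would therefore make sure that the witness $\Delta$ refuting $\Delta\yields A$ is itself a genuine configuration (which it is, being the antecedent of a hypersequent), so that the equality $\widehat{v}_R(A)=\invprov{A}_R$ applies cleanly and no proper preconfiguration slips into the reasoning. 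Granting the three lemmas, the theorem then follows at once.
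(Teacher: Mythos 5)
Your proposal is correct and takes essentially the same route as the paper: both rest on the canonical model $(\powerset{S},v_R)$ together with Lemma~\ref{truthlemma2} ($\model{M}\models R$), the Identity Lemma ($\Delta\in\widehat{v}_R(\Delta)$), and the Truth Lemma ($\widehat{v}_R(A)=\invprov{A}_R$), which immediately yield $R\vdash\Delta\yields A$. The only cosmetic difference is that you phrase the argument by contraposition (the canonical model as a countermodel to any non-derivable hypersequent), whereas the paper argues directly from the assumption $\prsd(R)\models\Delta\yields A$; the ingredients and the logical content are identical.
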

\begin{proof}
Suppose $\prsd( R )\models\Delta\yields A$. Hence, in particular this is true of the canonical
model $\model{M}$. Since $\Delta\in\widehat v(\Delta)$, it follows that $\Delta\in\widehat v(A)$.
By the Truth Lemma, $\widehat v(A)=\invprov{A}_R$. Hence, $R\vdash\Delta\yields A$. We are done.\qed
\end{proof}
% Strong completeness of \Dstartimp without \splitk{k} w.r.t. L-models
% Finitely generated standard DAs
% XXX
% Coding monomorphism

We shall also prove strong completeness w.r.t.\ L-models over the set of connectives $\Sigma[\rightarrow]-\mathbf{split}$,
where $\mathbf{split}=\{\splitk{k}:k>0\}$.
Since the canonical model $\cal S$ is countably infinite, $\car{S}$ is in bijection with a set $V_1=(a_i)_{i>0}\cup\{1\}$
via a mapping $\Phi$. Let $\cal A$ be the standard DA associated to $V_1$. ${\Phi}$ extends to an isomorphism of standard DAs between $\cal S$ and $\cal A$, and then induces an isomorphism $\bar{\Phi}$ of residuated powerset DAs over standard DAs.
Let $\alg{B}$ be a standard DA generated by the finite set of generators $V_2=\{a,b,1\}$. We have that $\car{A}=V_1^*$, and
$\car{B}=V_2^*$.
Let $\rho$ be the following injective mapping from $V_1$ into $V_2^*$:
\disp{
$
\begin{array}[t]{lllllll}
\rho(1)&=&1&\quad&
\rho(a_i)&=& a+b^i+a
\end{array}
$
\label{rhomap}}
The mapping $\rho$ extends recursively to the morphism of standard DAs.\commentout{
$$
\begin{array}{llll}
\rho:& \alg{A}&\longrightarrow&\alg{B}\\
& 0&\mapsto& 0\\
& 1&\mapsto& 1\\
& \rho(w_1+w_2)&\mapsto& \rho(w_1)+\rho(w_2)\\
& \rho(w_1\times_i w_2)&\mapsto& \rho(w_1)\times_i\rho(w_2)
\end{array}
$$}%\label{rhomorph}}
Clearly
$\rho$ is injective by freeness\footnote{Since the underlying structures are free monoids we can apply left/right cancellation.} of the underlying free monoids $\car{A}$ and $\car{B}$. 
The mapping $\rho$ is a monomorphism of DAs which induces
a monomorphism $\bar{\rho}$ of residuated powerset DAs over DAs.
\noindent Let $A$, $B$ and $C$ range over subsets of $\car{A}$ such that they are non-empty and 
different from $\{\epsilon\}$. Since $\rho$ is injective, so is $\bar{\rho}$. The following equalities hold:
\disp{
$\begin{array}[t]{lllllllll}
\bar{\rho}(A\cdot B)&=&\bar{\rho}(A)\cdot\bar{\rho}(B)&\mbox{ }\bar{\rho}(A\comp{i} B)&=&\bar{\rho}(A)\comp{i} \bar{\rho}(B) & \bar{\rho}(A\myover \mydiscunit)&=&\bar{\rho}(A)\myover \bar{\rho}(\mydiscunit)\\
\bar{\rho}(A\myunder B)&=&\bar{\rho}(A)\myunder \bar{\rho}(B)&\mbox{ }\bar{\rho}(B\myover A)&=&\bar{\rho}(B)\myover \bar{\rho}(A)&\bar{\rho}(\mydiscunit\myunder A)&=&\mydiscunit\myunder \bar{\rho}( A )\\
\bar{\rho}(A\myinfix_i B)&=&\bar{\rho}(A)\myinfix_i \bar{\rho}(B)&\mbox{ }\bar{\rho}(B\myextract_i A)&=&\bar{\rho}(B)\myextract_i \bar{\rho}(A)\\
%\mbox{ }\bar{\rho}(B\myextract_i \myconunit)&=&\bar{\rho}(B)\myextract_i \myconunit
\end{array}
$
\label{rhoprops1}}% \bar{\rho} properties
The equalities
(\ref{rhoprops1}) are due to the fact that 1) $\bar{\rho}$ is a monomorphism of DAs, 2) we can apply cancellation, and
3) the subsets considered are non-empty and different from $\{\epsilon\}$. Since $\bar\rho$ is injective, 
arbitrary families of (same-sort) subsets satisfy $\bar\rho(\bigcap_{i\in I} X_i)=\bigcap_{i\in I}\bar\rho(X_i)$.
Moreover, using (\ref{rhoprops1}) one proves:
\disp{
$
\begin{array}[t]{lll}
\bar{\rho}(\displaystyle\bigcap_{i=1}^{s(B)-s(A)+1}B\myextract_i A) &=& \displaystyle\bigcap_{i=1}^{s(B)-s(A)+1}\bar{\rho}( B)\myextract_i\bar{\rho}( A)\\
\bar{\rho}(\displaystyle\bigcap_{i=1}^{s(B)-s(A)+1}A\myinfix_i B) &=& \displaystyle\bigcap_{i=1}^{s(B)-s(A)+1}\bar{\rho}( A)\myinfix_i\bar{\rho}( A)
\end{array}
$
\label{rhoprops2}
}
Recall that $v$ is the valuation of the canonical model $\powerset{S}$.
Consider the following composition of mappings: $\Pr\overset{v}{\longrightarrow}\powerset S
\overset{\bar\Phi}{\longrightarrow}\powerset A\overset{\bar\rho}{\longrightarrow}\powerset B$. Put
$w=\bar{\rho}\circ\bar{\Phi}\circ v$. We have that $\widehat{w}=\bar{\rho}\circ\bar{\Phi}\circ \widehat{v}$.
In order to prove the last equality we have to see that $\bar{\rho}\circ\bar{\Phi}\circ \widehat{v}$
is a monorphism of DAs.\footnote{There is a unique morphism of DAs extending $w$.}
For example, if $A$ and $B$ are types, one has:
$$
\begin{array}{l}
(\bar{\rho}\circ\bar{\Phi}\circ \widehat{v})(B\extract_{k} A) =\bar{\rho}(\bar{\Phi}(\widehat{v}(B\extract_k A))
=\bar{\rho}(\bar{\Phi}(\widehat{v}(B)\myextract_k \widehat v(A))=\\
\bar\rho(\bar\Phi(\widehat{v}(B))\myextract_k\widehat v(A))\mbox{, $\bar\Phi$ is an isomorphism of DAs} =\\
\bar\rho(\bar\Phi(\widehat{v}(B)))\myextract_k \bar\rho(\bar\Phi(\widehat{v}(A)))\mbox{, $\bar\rho$ satisfies
(\ref{rhoprops1})}
\end{array}
$$
Similar computations give the desired equalities for the remaining considered implicative connectives.\footnote{
Including also projection connectives.}
Given a set of non-logical axioms $R$, $R\vdash_{\mathbf{hD}}\Delta\yields A$ iff $\widehat v(\Delta)
\subseteq\widehat v (A)$ (we write $\widehat v$ instead of $\widehat{ v_R}$) iff $(\bar\Phi\circ \widehat v)
(\Delta)\subseteq (\bar\Phi\circ \widehat v)(A)$ iff $(\bar\rho\circ\bar\Phi\circ \widehat v)(\Delta)\subseteq
(\bar\rho\circ\bar\Phi\circ \widehat v)(A)$. We have proved:
\begin{theorem}
$\Dstar[\Sigma_{\rightarrow}{-}\mathbf{split}]$ is strongly complete w.r.t.\ L-models.
\label{compldstarimLmodels}
\end{theorem}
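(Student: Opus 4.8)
The plan is to reduce to Theorem~\ref{compldstarimprstd} by transporting its canonical model, which lives over the countably infinite generator set $V$, into a single L-model, i.e.\ a powerset residuated DA over a finitely generated standard DA. Since the class of L-models is a subclass of \prsd{}, completeness with respect to it is the stronger statement, so it suffices to exhibit one L-model through which provability in $\hD$ is faithfully reflected.

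First I would use the isomorphism $\bar\Phi\colon\powerset{S}\to\powerset{A}$ induced by the bijection $\Phi$ between $\car{S}$ and the generator set $V_1=(a_i)_{i>0}\cup\{1\}$; being an isomorphism of residuated powerset DAs, $\bar\Phi$ commutes with every connective and both preserves and reflects $\subseteq$. Next I would fold the countably many generators into words over the finite alphabet $V_2=\{a,b,1\}$ via $\rho(a_i)=a+b^i+a$ and $\rho(1)=1$, extend $\rho$ to a monomorphism of standard DAs $\alg{A}\to\alg{B}$, and lift it to the monomorphism $\bar\rho$ of residuated powerset DAs. The code $a+b^i+a$ is designed so that images of distinct generators cannot overlap, which together with left/right cancellation in the free monoids yields the commutation equalities~(\ref{rhoprops1}) and their intersection analogues~(\ref{rhoprops2}).

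Then I would set $w=\bar\rho\circ\bar\Phi\circ v$, a valuation into the L-model $\powerset{B}$, and prove by induction on the structure of types over the connectives of $\Sigma[\rightarrow]-\mathbf{split}$ that $\widehat w=\bar\rho\circ\bar\Phi\circ\widehat v$; each inductive step is a single application of~(\ref{rhoprops1}) or~(\ref{rhoprops2}), the projections being covered because they residuate against the discontinuous unit $\mydiscunit=\{\sep\}$, which is non-empty and distinct from $\{\epsilon\}$. Chaining the characterization of provability in the canonical model (the Truth and Identity Lemmas, \ref{truthlemma} and~\ref{identlemma}) with the fact that $\bar\Phi$ and $\bar\rho$ both reflect $\subseteq$, I would obtain
$$R\vdash_{\mathbf{hD}}\Delta\yields A\;\Longleftrightarrow\;\widehat v(\Delta)\subseteq\widehat v(A)\;\Longleftrightarrow\;\widehat w(\Delta)\subseteq\widehat w(A).$$
Since monotonicity of $\bar\rho\circ\bar\Phi$ transports Lemma~\ref{truthlemma2} to give $(\powerset{B},w)\models R$, any hypersequent valid in all L-models satisfying $R$ is in particular valid in $(\powerset{B},w)$, hence derivable.

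The main obstacle is establishing~(\ref{rhoprops1}): unlike $\bar\Phi$, the map $\bar\rho$ is only a monomorphism, and a monomorphism of powerset DAs need not commute with residuals. The equalities hold only for subsets that are non-empty and different from $\{\epsilon\}$, so the delicate point is checking that every $\bar\Phi(\widehat v(A))$ lies in this class; this holds because $\widehat v(A)=\invprov{A}$ always contains the figure $\vect A\neq\Lambda$ (by eta-expansion) and $\bar\Phi$ preserves both non-emptiness and the unit. This is precisely why $\mathbf{split}$ must be excluded: $\splitk{i}A$ residuates against the continuous unit $\myconunit$, whose underlying subset is $\{\epsilon\}$, the one case in which~(\ref{rhoprops1}) may fail.
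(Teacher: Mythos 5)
Your proposal is correct and follows essentially the same route as the paper: the isomorphism $\bar\Phi$ onto the standard DA over $V_1$, the coding $\rho(a_i)=a+b^i+a$ lifted to the monomorphism $\bar\rho$, the valuation $w=\bar\rho\circ\bar\Phi\circ v$ with $\widehat w=\bar\rho\circ\bar\Phi\circ\widehat v$ established via the equalities~(\ref{rhoprops1}) and~(\ref{rhoprops2}), and the chain of equivalences reducing to Theorem~\ref{compldstarimprstd}. If anything, you make explicit two points the paper leaves tacit --- that the canonical interpretations $\widehat v(A)$ are non-empty and distinct from $\{\epsilon\}$ (via eta-expansion), and that $\mathbf{split}$ is excluded precisely because it residuates against $\myconunit=\{\epsilon\}$ --- which strengthens rather than alters the argument.
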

\begin{corollary}
$\Dstar[\Sigma_{\rightarrow}{-}\mathbf{split}]$ is strongly complete w.r.t.\ powerset residuated DAs overs standard DAs with $3$ generators.
\label{corolthreegens}
\end{corollary}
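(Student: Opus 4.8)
The plan is to derive Corollary~\ref{corolthreegens} directly from Theorem~\ref{compldstarimLmodels}, exploiting the fact that the construction witnessing that theorem already produces, for each consequence relation $R\vdash\Delta\yields A$, a \emph{concrete} L-model built over a \emph{specific} finitely-generated standard DA. Recall that an L-model is by definition a powerset residuated DA over a \emph{finitely-generated} standard DA; Corollary~\ref{corolthreegens} simply pins down the number of generators. So the work is to inspect the proof of Theorem~\ref{compldstarimLmodels} and observe that the target standard DA $\alg B$ used there is the one generated by the three-element set $V_2=\{a,b,1\}$.

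First I would note that soundness is immediate: any powerset residuated DA over a standard DA with $3$ generators is in particular an L-model (it is a powerset residuated DA over a finitely-generated standard DA), so every hypersequent valid in all L-models is valid in all such $3$-generator models, and by the soundness direction of Theorem~\ref{compldstarimLmodels} every $\hD[\rightarrow]{-}\mathbf{split}$-provable hypersequent is valid in them. For the completeness direction, suppose $R\nvdash_{\mathbf{hD}}\Delta\yields A$. By the chain of equivalences established just before Theorem~\ref{compldstarimLmodels}, this is equivalent to $(\bar\rho\circ\bar\Phi\circ\widehat v)(\Delta)\not\subseteq(\bar\rho\circ\bar\Phi\circ\widehat v)(A)$, i.e.\ $\widehat w(\Delta)\not\subseteq\widehat w(A)$ where $w=\bar\rho\circ\bar\Phi\circ v$ is a valuation into $\powerset B$. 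Hence the model $(\powerset B,w)$ refutes $\Delta\yields A$, and since $\powerset B$ is a powerset residuated DA over the standard DA $\alg B$ generated by $V_2=\{a,b,1\}$, this is a $3$-generator model. I would also record that $(\powerset B,w)\models R$, which transfers across the isomorphism $\bar\Phi$ and the monomorphism $\bar\rho$ from Lemma~\ref{truthlemma2}, so the refutation takes place in a model of $R$, as required for \emph{strong} completeness.

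The only subtlety worth flagging is that the mapping $\rho$ of~(\ref{rhomap}) sends the countably many generators $a_i$ into distinct words $a+b^i+a$ over the \emph{three} letters $\{a,b,1\}$, and that this suffices precisely because the equalities~(\ref{rhoprops1}) and~(\ref{rhoprops2}) — which make $\bar\rho$ a monomorphism of residuated powerset DAs — hold for the nonempty, non-$\{\epsilon\}$ subsets in question. Thus the passage from countably many generators to three is entirely absorbed by the injectivity and structure-preservation of $\bar\rho$ already verified in the proof of Theorem~\ref{compldstarimLmodels}.

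The main obstacle is not computational but conceptual bookkeeping: one must be certain that $\alg B$ really is a \emph{standard} DA and not merely a general DA, so that $(\powerset B,w)$ genuinely qualifies as an L-model over $3$ generators. This is guaranteed because $\alg B$ is defined as the standard DA induced by the separated monoid on the finite generator set $V_2=\{a,b,1\}$, with $1$ the distinguished separator; hence it satisfies the separation properties~(\ref{sepdaprop}) and falls under Lemma~\ref{stdaaredas}. Once this is observed, the corollary follows with no further argument, since the proof of Theorem~\ref{compldstarimLmodels} was already carried out over exactly this $3$-generator standard DA.
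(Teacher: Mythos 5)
Your proposal is correct and takes essentially the same route as the paper, which states the corollary without separate proof precisely because the construction proving Theorem~\ref{compldstarimLmodels} already produces its countermodels inside $\powerset{B}$, the powerset residuated DA over the standard DA $\alg{B}$ generated by the three-element set $V_2=\{a,b,1\}$. Your additional bookkeeping (that $(\powerset{B},w)\models R$ via $\widehat{w}=\bar{\rho}\circ\bar{\Phi}\circ\widehat{v}$, and that $\alg{B}$ is genuinely a standard DA) is exactly what the paper leaves implicit.
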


\commentout{
\section{Towards Strong Completeness of full \D{} w.r.t.\ \prdd{}}\label{fullDcomplsect}
We sketch\footnote{We do not have enough space to justify the main claims. But, we believe that this sketch is quite illuminating.} in this section strong completeness of full \D{} w.r.t.\ \prdd{}. We get this result by proving a representation
theorem between \rd{} and \prdd{}. In order to get this representation theorem we need to consider $\Dstar$ (unit-free $\D$),
and consequently, \tpstar{} (unit-free \tp{}), and \configstar{} (unit-free \config{}). This is a step to prove 
strong completeness for full \D{} (without
restrictions on the units).
We give the mutually recursive definition of the set $\cal{T}$ of hypertrees, and the
set of atomic terms:
\disp{
\footnotesize{
$
\begin{array}{l}
\Lambda,\sep\in\T\\
\mbox{If $A\in\tpstar$, then } A \mbox{ is an atomic term}\\
\mbox{If $T\in\T,A,B\in\tpstar$, then } (T;A\product B;\f)\mbox{Ê is an atomic term}\\
\mbox{If $T\in\T,A,B\in\tpstar$, then } (T;A\product B;\s)\mbox{Ê is an atomic term}\\
\mbox{If $T\in\T,A,B\in\tpstar$, then } (T;A\dprod_i B;\f)\mbox{Ê is an atomic term}\\
\mbox{If $T\in\T,A,B\in\tpstar$, then } (T;A\dprod_i B;\s)\mbox{Ê is an atomic term}\\
s((T;A\product B;\f))=s(A)\mbox{ and }s((T;A\product B;\s))=s(B)\\
s((T;A\dprod_i B;\f))=s(A)\mbox{ and }s((T;A\dprod_i B;\s))=s(B)\\
\Lambda\in\T\mbox{, and }\sep\in\T\\
\mbox{If $L$ atomic, then }\vect L\bydef L\{\underbrace{1:\cdots:1}_{\mbox{$s(L)$ $\sep$'s}}\}\in\T\\
\mbox{If $T,S\in\T$, then }T,S\in \T\\
\mbox{If $T,S\in\T$, then }T|_i S\in \T\\
\end{array}
$
}
\label{hypertrees}
}
Like \config{}, \T{} is sorted, and for every $T\in\T$, $s(T)$ is simply the number of separators $T$ contains.
We put $\T_i=\{T:T\in\T\mbox{ and }s(T)=i\}$ for $i\geq 0$. We have then that $\T=\bigcup_{i\geq 0}\T_i$. Notice that
\T{} includes the set \config{}.
We define now a notion of reduction $\reduces$ in $\T$. Where $A,B\in\tpstar$, and 
$T_i, S_j$, $(i=1,\cdots,s(A),\mbox{ and },j=1,\cdots,s(B))$ are hypertreees, we have:
\disp{
$
\begin{array}{l}
\left\{ % Continuous product
\begin{array}{lll}
\vect{(T;A\product B;\f)}\{T_1:\cdots:T_{s(A)}\},\vect{(T;A\product B;\s)}\{R_1:\cdots:R_{s(B)}\}\\
\reduces T\otimes\langle T_1:\cdots:T_{s(A)}:R_1:\cdots:R_{s(B)}\rangle\\
\end{array}
\right.\vtab
\left\{ % Discontinuous product
\begin{array}{lll}
\vect{(T;A\dprod_i B;\f)}\{T_1:\cdots:\vect{(T;A\dprod_i B;\f)}\{R_1:\cdots:R_{s(B)}\}:\cdots:T_{s(A)}\}\\
\reduces T\otimes\langle T_1:\cdots:R_1:\cdots:R_{s(B)}:\cdots:T_{s(A)}\rangle\\
\end{array}
\right.
\end{array}
$
\label{reductions}% Reduction in \T
}
By a simple primitive type counting argument, one sees that the transitive closure of $\reduces$ $\reducest$, is always
terminating, i.e. $\reducest$ is strongly normalising. Again, by primitive type counting arguments, one sees that $\reducest$
is weakly Church-Rosser, and hence, by Newman's lemma, $\reducest$ is Church-Rosser. This allows for every element
of $T\in\T$ to define its normal form $irr(T)$. We put $\Irr=irr(\T)$. Since $\T$ is sorted, $\Irr$ is also sorted.
We have that $\Irr=\bigcup_{i\geq 0}\Irr_i$, where $\Irr_i=\{T:T\in\Irr\mbox{ and }s(T)=i\}$.
Let us consider the \sigmad{}-algebra:
\disp{
$\
\Irr=(\Irr,\tilde{+},(\tilde{\times_i})_{i\geq 0},\Lambda,\sep)
$
\label{IrrasDA}
}
Where $\tilde{+},\mbox{ and }(\tilde{\times_i})_{i\geq 0},$ are defined as follows::
\disp{
$
\begin{array}{lll}
T\tilde{+}S&\bydef& irr(T,S)\\
T\tilde{\times_i}S&\bydef& irr(T,S)\\
\end{array}
$
\label{opsIrr}
}
% x?i (y?j z)Å(x?i y)?i+j?1 z
By Church-Rosser, \Irr{} is easily seen to be a (nonstandard) DA. For, given the arbitrary hypertrees $T_1$, $T_2$ and $T_3$, for
example discontinuous associativity is proved as follows:
\disp{
$
\begin{array}{lll}
T_1 \timesnd i(T_2\timesnd j T_3)&=&irr(T_1|_i irr(T_2|_jT_3))\\
&=&Irr(T_1|_i (T_2|_jT_3))=Irr((T_1|_i T_2)|_{i+j-1}T_3)\\
&=&Irr(Irr(T_1|_i T_2)|_{i+j-1}T_3)\\
&=& (T_1 \timesnd i T_2)\timesnd{i+j-1} T_3
\end{array}
$
\label{IrrisaDA}
} 
\Irr{} induces the powerset residuated DA over the DA \Irr{}, which we denote $\powerset{\Irr}$.

Following Buszkowski's technics on labelled deductive systems (\cite{buszkowski:comp86}), we can now introduce in Figure~\ref{ndstarcalculus} a natural deduction system \nDstar{} for a conservative extension of \Dstar{}. $R$ is a given
set of \Dstar{}-hypersequents.
\begin{figure}[h]
$$
\begin{array}{l}
\vect{A}\ndyields A\mbox{ for every }A\in\tpstar\vtab
\prooftree
T\ndyields B/A\tb S\ndyields A
\justifies
T,S\ndyields B
\using /E
\endprooftree
\tb
\prooftree
T,\vect A\ndyields B
\justifies
T\ndyields B/A
\using /I
\endprooftree\vtab
\prooftree
S\ndyields A\tb T\ndyields A\bsl B
\justifies
T,S\ndyields B
\using \bsl E
\endprooftree
\tb
\prooftree
\vect A,T\ndyields B
\justifies
T\ndyields A\bsl B
\using \bsl I
\endprooftree\vtab
\prooftree
T\ndyields A\product B
\justifies
\vect{(T;A\product B;\f)}\ndyields A
\using \product E1
\endprooftree
\tb
\prooftree
T\ndyields A\product B
\justifies
\vect{(T;A\product B;\s)}\ndyields B
\using \product E2
\endprooftree\vtab
\prooftree
T\ndyields A\tb S\ndyields B
\justifies
T,S\ndyields A\product B
\using \product I
\endprooftree\vtab
% discontinuous rules
\prooftree
T\ndyields B\extract_i A\tb S\ndyields A
\justifies
T|_iS\ndyields B
\using \extract_i E
\endprooftree
\tb
\prooftree
T|_i\vect A\ndyields B
\justifies
T\ndyields B\extract_i A
\using \extract_i I
\endprooftree\vtab
\prooftree
S\ndyields A\infix_i T\ndyields A\bsl B
\justifies
S|_i T\ndyields B
\using \infix_i E
\endprooftree
\tb
\prooftree
\vect A|_iT\ndyields B
\justifies
T\ndyields A\infix_i B
\using \bsl I
\endprooftree\vtab
\prooftree
T\ndyields A\dprod{i} B
\justifies
\vect{(T;A\dprod i B;\f)}\ndyields A
\using \dprod i E1
\endprooftree
\tb
\prooftree
T\ndyields A\dprod i B
\justifies
\vect{(T;A\dprod i B;\s)}\ndyields B
\using \dprod i E2
\endprooftree\vtab
\prooftree
T\ndyields A\tb S\ndyields B
\justifies
T|_i S\ndyields A\dprod i B
\using \dprod i I
\endprooftree\vtab
\prooftree
T\ndyields A
\justifies
S\ndyields A
\using Red\mbox{ If T\reducest S} 
\endprooftree\vtab
\prooftree
T_i\ndyields A_i\mbox{ where }i=1,\cdots,n
\justifies
\Delta\subst{T_1\cdots T_n}{A_1\cdots A_n}\ndyields A
\using Axiom_R\mbox{, }(\Delta\yields A)\in R
\endprooftree
\end{array}
$$
\caption{\nDstar{} rules}
\label{ndstarcalculus}
\end{figure}
The axiom rule has as premises $T_i\ndyields A_i$ where $(A_i)_{i=1,\cdots,n}$ is the sequence of type-occurrences
of $\Delta$. We can prove that for $R\vdash \Delta\yields A$ iff $R\vdash \Delta\ndyields A$.

One considers the following canonical model $\mathcal{M}=(\powerset{\Irr},\alpha_R)$, where $\alpha_R( p )=[p]_R\bydef
\{T:R\vdash T\ndyields p\}$. Writing $\interp{\cdot}{}{}$ instead of $\interp{\cdot}{\cal M}{\alpha_R}$, one proves that for every type $A\in\tpstar{}$, $\interp{A}{}{}=[A]_R$. By rule $Axiom_R$ of \nDstar{}, it is readily seen that $\mathcal{M}\models R$.
The product rules of elimination help to straightforwardly prove that $\interp{A\star B}{}{}=[A\star B]_R$, where $\star\in\{\bullet,\dprod_i:i>0\}$. To prove that for every residuated DA algebra $\alg{A}$ is isomorphically embeddable into a powerset rediduated DA over a DA, one defines a bijection between the carrier set of $\alg{A}$ and the set of primitive types $Pr=(p_a)_{a\in\car{A}}$. We define the valuation $\mu(p_a)=a$, and the set of hypersequents which hold in $(\alg{A},\mu)$, i.e.\ $R=\{\Delta\yields A:
\mu(\Delta)\leq \mu(A)\}$. We consider the canonical model $\mathcal{M}=(\powerset{\Irr}, \alpha_R)$, and we define
the faithful monomorphism $h:\car{A}\rightarrow \car{\powerset{\Irr}}$ such that $h(a):=\alpha_R(p_a)$, and $h(0)=\Lambda$, and $h(1)=1$.
Finally, in order to obtain full strong completeness w.r.t. \prdd{}, one uses the representation theorem and the fact that \hD{} is strongly complete with respect residuated DAs (see Subsection~\ref{subsectcD}).
\section{Conclusions}
The strong completeness theorems we have proved are quite analogous to the ones of \LCstar{}. The semantics are quite 
natural as in the case of \LCstar{}. We think that these results constitute a big step towards the study of the model theory of \hD{}.

It is known that $\LCstar{}$ is not weakly complete w.r.t.\ free monoids (see~\cite{valentin:phd}). Hence, weak completeness of full \D{} (with units) w.r.t.\ L-models is not possible. It remains open
whether $\Dstar$ is weakly complete w.r.t.\ L-models.
}% end full completeness
\bibliographystyle{plain}
\bibliography{bib150311}

\begin{thebibliography}{10}

\bibitem{avron:91}
A.~Avron.
\newblock {Hypersequents, Logical Consequence and Intermediate Logic form
  Concurrency}.
\newblock {\em Annals of Mathematics and Artificial Intelligence}, 4:225--248,
  1991.

\bibitem{buszkowski:comp86}
W.~Buszkowski.
\newblock Completeness results for \mbox{Lambek} syntactic calculus.
\newblock {\em Zeitschrift f{\"{u}}r mathematische Logik und Grundlagen der
  Mathematik}, 32:13--28, 1986.

\bibitem{goguen}
J.A. Goguen and J.~Meseguer.
\newblock {Completeness of Many-Sorted Equational Logic}.
\newblock {\em Houston Journal of Mathematics}, 11(3):307--334, 1985.

\bibitem{lalement}
R.~Lalement.
\newblock {\em {Logique, r\'{e}duction, r\'{e}solution}}.
\newblock {\'{e}tudes et recherches en informatique}. Masson, Paris, 1990.

\bibitem{lambek:mathematics}
Joachim Lambek.
\newblock {The mathematics of sentence structure}.
\newblock {\em {American Mathematical Monthly}}, 65:154--170, 1958.
\newblock {Reprinted in Buszkowski, Wojciech, Wojciech Marciszewski, and Johan
  van Benthem, editors, 1988, {\it Categorial Grammar}, Linguistic \& Literary
  Studies in Eastern Europe volume~25, John Benjamins, Amsterdam, 153--172.}

\bibitem{moot:tdc}
Richard Moot.
\newblock Extended {L}ambek calculi and first-order linear logic.
\newblock In Claudia Casadio, Bob Coecke, Michael Moortgat, and Philip Scott,
  editors, {\em Categories and Types in Logic, Language, and Physics}, volume
  8222 of {\em Lecture Notes in Computer Science}, pages 297--330. Springer
  Berlin Heidelberg, 2014.

\bibitem{mv:cl}
G.~Morrill and O.~Valent\'in.
\newblock Spurious ambiguity and focalisation.
\newblock Manuscript, Submitted.

\bibitem{mfv:iwcs07}
Glyn Morrill, Mario Fadda, and Oriol Valent\'{\i}n.
\newblock {Nondeterministic Discontinuous Lambek Calculus}.
\newblock In Jeroen Geertzen, Elias Thijsse, Harry Bunt, and Amanda Schiffrin,
  editors, {\em Proceedings of the Seventh International Workshop on
  Computational Semantics, IWCS-7}, pages 129--141. Tilburg University, 2007.

\bibitem{mv:disp}
Glyn Morrill and Oriol Valent\'{\i}n.
\newblock {Displacement Calculus}.
\newblock {\em Linguistic Analysis}, 36(1--4):167--192, 2010.
\newblock Special issue Festschrift for Joachim Lambek,
  http://arxiv.org/abs/1004.4181.

\bibitem{mv:tag+10}
Glyn Morrill and Oriol Valent\'{\i}n.
\newblock {On Calculus of Displacement}.
\newblock In Srinivas Bangalore, Robert Frank, and Maribel Romero, editors,
  {\em TAG+10: Proceedings of the 10th International Workshop on Tree Adjoining
  Grammars and Related Formalisms}, pages 45--52, New Haven, 2010. Linguistics
  Department, Yale University.

\bibitem{mvf:tbilisi}
Glyn Morrill, Oriol Valent\'{\i}n, and Mario Fadda.
\newblock {Dutch Grammar and Processing: A Case Study in TLG}.
\newblock In Peter Bosch, David Gabelaia, and J\'{e}r\^{o}me Lang, editors,
  {\em Logic, Language, and Computation: 7th International Tbilisi Symposium,
  Revised Selected Papers}, number 5422 in Lecture Notes in Artificial
  Intelligence, pages 272--286, Berlin, 2009. Springer.

\bibitem{mvf:tdc}
Glyn Morrill, Oriol Valent\'{\i}n, and Mario Fadda.
\newblock {The Displacement Calculus}.
\newblock {\em Journal of Logic, Language and Information}, 20(1):1--48, 2011.
\newblock Doi 10.1007/s10849-010-9129-2.

\bibitem{sorokin:gencapD}
Alexey Sorokin.
\newblock Normal forms for multiple context-free languages and displacement
  lambek grammars.
\newblock In Sergei Artemov and Anil Nerode, editors, {\em Logical Foundations
  of Computer Science}, volume 7734 of {\em Lecture Notes in Computer Science},
  pages 319--334. Springer Berlin Heidelberg, 2013.

\bibitem{valentin:phd}
Oriol Valent\'{\i}n.
\newblock {\em {Theory of Discontinuous Lambek Calculus}}.
\newblock PhD thesis, Universitat Aut\`{o}noma de Barcelona, Barcelona, 2012.

\end{thebibliography}

\end{document}